\newcolumntype{M}[1]{>{\centering\arraybackslash}m{#1}}
\newtheorem{example}{Example}
\newtheorem{proposition}{Proposition}
\newtheorem{definition}{Definition}
\newtheorem{lemma}{Lemma}
\newtheorem{corollary}{Corollary}
\newenvironment{proof}{\noindent\textit{Proof~~}}
{\nolinebreak[4]\hfill$\blacksquare$\\\par}
\title{Permissible four-strategy quantum extensions of classical games}
\author{
  Piotr Frąckiewicz\\
  \textit{Institute of Exact and Technical Sciences, Pomeranian University in Słupsk, Poland} \\
  piotr.frackiewicz@upsl.edu.pl
  \and
  Anna Gorczyca-Goraj\\
  \textit{Department of Operations Research,
        University of Economics in Katowice, Poland} \\
        anna.gorczyca-goraj@uekat.pl
    \and
  Marek Szopa\\
  \textit{Department of Operations Research,
        University of Economics in Katowice, Poland} \\
        marek.szopa@uekat.pl
}
\date{\today}
\begin{document}
\maketitle

\begin{abstract}
The study focuses on strategic-form games extended in the Eisert-Wilkens-Lewenstein scheme by two unitary operations. Conditions are determined under which the pair of unitary operators, along with classical strategies, form a game invariant under isomorphic transformations of the input classical game. These conditions are then applied to determine these operators, resulting in five main classes of games satisfying the isomorphism criterion, and a theorem is proved providing a practical criterion for this isomorphism. The interdependencies between different classes of extensions are identified, including limit cases in which one class transforms into another.
\end{abstract}
\section{Introduction}
The burgeoning field of quantum game theory, particularly through the Eisert-Wilkens-Lewenstein (EWL) approach  \cite{eisert_quantum_1999} presents a novel paradigm for understanding strategic interactions in quantum information processing and decision-making. This innovative perspective not only extends classical game theory into the quantum domain but also uncovers new dimensions of strategic complexity and potential advantages inherent in quantum mechanics. The EWL approach, foundational to quantum game theory, has facilitated the translation of classical game models into the quantum framework, enabling the analysis of games with superposed states and entanglement. By incorporating quantum strategies, which are essentially operations on quantum states, this approach allows for the exploration of extensions that are unattainable within the classical strategic landscape. Particularly, the study of mixed strategies involving several pure quantum strategies opens up a plethora of strategic options and outcomes, potentially surpassing the limitations of classical mixed strategies.

Despite more than 20 years of research on the EWL scheme, precise conditions indicating appropriate unitary strategies in the EWL scheme have not been defined. A natural choice is the special unitary group $\mathsf{SU}(2)$ \cite{benjamin_comment_2001}. These operations are closed with respect to multiplication. In addition, the extension of $\mathsf{SU}(2)$ to the whole unitary group $\mathsf{U}(2)$ adds only strategies that are payoff equivalent to operations of $\mathsf{SU}(2)$. The set $\mathsf{SU}(2)$ has another important property. Namely, having two classical $2\times 2$ games differing in the order of the rows or columns of the bimatrix, we can be sure that an EWL scheme with a set of $\mathsf{SU}(2)$ strategies will always generate identical games \cite{frackiewicz_strong_2016}. More precisely, the EWL scheme with the strategy set $\mathsf{SU}(2)$ implies equivalent games whenever the considered classical games are isomorphic. The independence of the EWL game with respect to isomorphic transformations of the classical game guarantees that any game theory problem expressed in EWL terms will generate exactly the same game. 

An example that perfectly illustrates this point is the Prisoner's Dilemma game -- a problem considered in the pioneering work \cite{eisert_quantum_1999}. There, a bimatrix representing the Prisoner's Dilemma problem was studied using a certain subset of two-parameter unitary operations. The authors showed that in the set of two-parameter unitary operations, there exists a strategy profile that is a Pareto-optimal Nash equilibrium. While it may seem that the approach presented in \cite{eisert_quantum_1999} resolves the Prisoner's Dilemma in the quantum domain, doubts arise due to the use of a proper subset of $\mathsf{SU}(2)$ instead of the entire set. As we demonstrated in \cite{frackiewicz_quantum_2016,frackiewicz_permissible_2024}, interchanging rows or columns in the bimatrix significantly impacts the set of Nash equilibria in the corresponding EWL game when some specific unitary operations are chosen. In other words, considering the same problem in game theory, up to the order in which we write rows and columns, we obtain completely different solutions in quantum games. This fact clearly indicates that the requirement for the invariance of the scheme under isomorphic transformations of a classical game is necessary in order to consider the EWL scheme as a valid extension of the classical game.

The extensions of classical games considered in this paper are based on equipping players of the classical $2\times 2$ game with a set of four unitary strategies. Among these, two correspond exactly to the initial classical strategies, while the other two are appropriate extensions. On these additional quantum strategies, however, the players act in a classical way and mix them with other strategies in any way they wish. This approach allows them to develop mixed strategies analogous to those in the classical game, but now enriched by the extended game. It could be argued that classical players might not recognize the quantum genesis of these supplementary strategies, but still use them effectively. Extending the game in this manner allows players to achieve outcomes that are superior to those achievable within the confines of the classical game alone. Here, 'superior' refers to scenarios where, for instance, players seeking a Nash equilibrium can achieve outcomes that are closer to Pareto optimality than what is possible in the classical game.

The aim of the current work is to determine all such extensions that meet the condition of invariance towards isomorphic transformations of the classical game, i.e. that are a faithful reflection of the classical game. This implies that players are equipped with quantum strategies that exactly mimic the way classical strategies work, ensuring that the added strategies do not change the fundamental structure of the game. Consequently, different isomorphic versions of the same classical game should yield identical, up to isomorphism, versions of the extended game. The significance of this approach in quantum game theory lies in its recognition of a condition that has been historically disregarded. In fact, deviation from this principle was a common trend in the evolution of quantum game theory to date \cite{eisert_quantum_1999,khan_quantum_2018,naskar_quantum_2021}. However, this results in unjustified ambiguity of extensions, which varies depending on the form of the initial game.

In the first part of section 2, we present necessary concepts of classical game theory, game isomorphism and payoff equivalent strategies accompanied by illustrative examples. This section's second part offers a concise overview of the EWL quantization method, highlighting and using examples to illustrate the concept of payoff equivalent quantum strategies. This chapter will also present two essential theorems. The first one demonstrates the form of transformations of unitary strategies that ensure the invariance the quantum game payoffs for specific isomorphic forms of the classical game. The second, crucial theorem formulates a criterion - a necessary condition for the isomorphicity of the EWL extensions, generated by a finite set of strategies, of isomorphic versions of the same classical game. In the third section, we provide reader with the acquired solutions for this criterion, which include five classes of possible parameters for strategies that result in acceptable extensions. The fourth section introduces a practical way to determine the invariance of bimatrices for extensions, using four isomorphic forms of the classical game. Furthermore, we present the explicit form of the five extensions of the classical game using $4\times 4$ bimatrices that are invariant to isomorphic transformations of the classical game and interdependencies between different classes of extensions, i.e. limiting cases in which one class transforms into another.

\vspace{12pt}

\section{Preliminaries}
Our article is self-contained as we introduce essential concepts from both game theory and quantum game theory in this section. 
\subsection{Classical game theory}
We focus on one of the primary types of games in non-cooperative game theory, namely, in strategic form games \cite{maschler_game_2020}.
\begin{definition}
A game in strategic form is a triple $\Gamma = (N, (S_{i})_{i\in N}, (u_{i})_{i\in N})$ in which 
\begin{enumerate}
\item[(i)] $N = \{1,2, \dots, p\}$ is a finite set of players;
\item[(ii)] $S_{i}$ is the set of strategies of player $i$, for each player $i\in N$;
\item[(iii)] $u_{i}\colon S_{1}\times S_{2} \times \cdots \times S_{p} \to \mathds{R}$ is a function associating each vector of strategies $s = (s_{i})_{i\in N}$ with the payoff $u_{i}(s)$ to player $i$, for every player $i\in N$.
\end{enumerate}
\end{definition}
In the case of a two-player scenario, a strategic-form game can be represented by a bimatrix,
\begin{equation}
\begin{pmatrix}
\Delta_{00} & \Delta_{01} \\
\Delta_{10} & \Delta_{11}
\end{pmatrix}, ~~\text{where}~~\Delta_{ij} = (a_{ij}, b_{ij}) ~\text{and}~ a_{ij}, b_{ij} \in \mathds{R}.
\end{equation}
The rows and columns of the bimatrix are then identified with the strategies of the first and second player, respectively. Each entry in the bimatrix is a pair of payoffs for the players.

Now, we recall the notion of isomorphism as it applies to strategic-form games. The definition is based on \cite{gabarro_complexity_2007}, (see also \cite{nash_non-cooperative_1951,peleg_canonical_1999,sudholter_canonical_2000}). Since we restrict ourselves to two-person games, we provide a simplified version that does not include numbering of the players. 
\begin{definition}
Given $\Gamma = (N, (S_{i})_{i\in N}, (u_{i})_{i\in N})$ and $\Gamma' = (N, (S'_{i})_{i\in N}, (u'_{i})_{i\in N})$, let $(\varphi_{i})_{i\in N}$ be a collection of bijections $\varphi_{i}$ from $S_{i}$ to $S'_{i}$. A collection $(\varphi_{i})_{i\in N}$ is a strong isomorphism between $\Gamma$ and $\Gamma'$ if relation 
\begin{equation}\label{isocondition}
u_{i}\bigl((s_{1}, s_{2}, \dots, s_{p})\bigr) = u'_{i}\bigl((\varphi_{1}(s_{1}), \varphi_{2}(s_{2}), \dots, \varphi_{p}(s_{p}))\bigr)
\end{equation}
holds for each $i\in N$ and each strategy profile $(s_{1}, s_{2}, \dots, s_{p}) \in S_{1} \times S_{2} \times \cdots \times S_{p}$. In this case, the games $\Gamma$ and $\Gamma'$ are referred to as strongly isomorphic.
\end{definition}
\begin{example}
Let us consider a $2\times 2$ bimatrix game
\begin{equation}
\label{bimatrixiso}
\Gamma = \bordermatrix{ & C & D\cr
A & \Delta_{00} & \Delta_{01} \cr
B & \Delta_{10} & \Delta_{11}
}.
\end{equation}
There are three different bimatrix games 
\begin{equation}
\label{bimatrix2}
\bordermatrix{ & C' & D'\cr
A' & \Delta'_{00} & \Delta'_{01} \cr
B' & \Delta'_{10} & \Delta'_{11}
},
\end{equation}
that are isomorphic to (\ref{bimatrixiso}):
\begin{align}\label{gamma1}
&\text{game with swapped rows} \quad  \Gamma^1 = \bordermatrix{ & C & D\cr
B & \Delta_{10} & \Delta_{11} \cr
A & \Delta_{00} & \Delta_{01}
},\\ \label{gamma2}
&\text{game with swapped columns} \quad  \Gamma^2 =\bordermatrix{ & D & C\cr
A & \Delta_{01} & \Delta_{00} \cr
B & \Delta_{11} & \Delta_{10}
}, \\ \label{gamma3}
&\text{game with swapped rows and columns} \quad  \Gamma^{3} = \bordermatrix{ & D & C\cr
B & \Delta_{11} & \Delta_{10} \cr
A & \Delta_{01} & \Delta_{00}
}.
\end{align}
For instance, the game (\ref{gamma1}) is isomorphic to (\ref{bimatrixiso}) as a result of applying two bijections
\begin{equation}
\varphi_{1} = (A \to B', B \to A'), \quad \varphi_{2} = (C \to C', D \to D')
\end{equation}
for (\ref{bimatrixiso}) and (\ref{bimatrix2}). Then the payoff functions are
    \begin{equation}
    \begin{aligned}
&u_{i}(A,C) = u'_{i}(B',C') \Leftrightarrow \Delta_{00} = \Delta'_{10}, \\
&u_{i}(A,D) = u'_{i}(B',D') \Leftrightarrow \Delta_{01} = \Delta'_{11}, \\
&u_{i}(B,C) = u'_{i}(A',C') \Leftrightarrow \Delta_{10} = \Delta'_{00}, \\
&u_{i}(B,D) = u'_{i}(A',D') \Leftrightarrow \Delta_{11} = \Delta'_{01}.
\end{aligned}
    \end{equation}
\end{example}
We denote the Cartesian product of all the strategy sets $S_{j}$ except for the set $S_{i}$ by $S_{-i} = \varprod_{j\ne i}S_{j}$. An element in $S_{-i}$ will be denoted by $s_{-i}$. The next concept we use in our work is the notion of strategy equivalence with respect to payoffs \cite{myerson_game_1991}.
\begin{definition}\label{meyerson}
Given a strategic-form game $\Gamma = (N, (S_{i})_{i\in N}, (u_{i})_{i\in N})$, two pure strategies $s'_{i}$, $s''_{i}$ are payoff equivalent if
\begin{equation}
u_{j}(s'_{i}, s_{-i}) = u_{j}(s''_{i}, s_{-i}) 
\end{equation}
for all $s_{-i} \in S_{-i}$ and for all $j \in N$.
\end{definition}
Let $s_{i} \in S_{i}$. We denote by $[s_{i}]$ the set of pure strategies $s'_{i}$ that are payoff equivalent with $s_{i}$.\\

In bimatrix games, payoff equivalent pure strategies can be recognized as two identical rows or two identical columns. Surprisingly, this concept, originating from classical game theory, appears to have nontrivial significance in quantum games, where unitary operations play the role of strategies. The significance of this concept will be presented during the discussion of the quantum game scheme in the next subsection.

Another concept of game theory that we use is Nash equilibrium \cite{nash_equilibrium_1950}, \cite{maschler_game_2020}. 
\begin{definition}
A strategy vector $s^* = (s^*_{1}, s^*_{2}, \dots, s^*_{n})$ is a Nash equilibrium if for each player $i\in N$ and each strategy $s_{i} \in S_{i}$ the following inequalities are satisfied
\begin{equation*}
u_{i}(s^*) \geq u_{i}(s_{i}, s^*_{-i}). 
\end{equation*}
\end{definition}
To put it another way, a Nash equilibrium is a strategy profile at which no player has a profitable deviation when all the remaining players do not change their strategies.
\subsection{Eisert-Wilkens-Lewenstein quantum game scheme}
In this section, we recall the EWL-type scheme \cite{eisert_quantum_1999} for $2\times 2$ games (\ref{bimatrixiso}).
The special unitary group $\mathsf{SU}(2)$ plays the role of strategy sets in the EWL scheme. The commonly used parametrization of the unitary strategy from $\mathsf{SU}(2)$ is
\begin{equation}\label{Umatrix}
U(\theta, \alpha, \beta) = \begin{pmatrix}
e^{i\alpha}\cos\frac{\theta}{2} & ie^{i\beta}\sin\frac{\theta}{2} \\ ie^{-i\beta}\sin\frac{\theta}{2} & e^{-i\alpha}\cos\frac{\theta}{2}
\end{pmatrix}, \quad \theta\in [0,\pi], \quad \alpha, \beta \in [0,2\pi).
\end{equation}
By choosing their $U_{1}(\theta_{1}, \alpha_{1}, \beta_{1})$ and $U_{2}(\theta_{2}, \alpha_{2}, \beta_{2})$ strategies, players determine the final state of the game
\begin{equation}
|\Psi\rangle = J^{\dag}\left(U_{1}(\theta_{1}, \alpha_{1}, \beta_{1})\otimes U_{2}(\theta_{2}, \alpha_{2}, \beta_{2}) \right)J|00\rangle, 
\end{equation}
where $J = (I\otimes I + i\sigma_{x}\otimes \sigma_{x})/\sqrt{2}$.

The payoff $u_{i}$ for player $i$ is defined as
\begin{equation}\label{payoffM}
u_{i}(U_{1}(\theta_{1}, \alpha_{1}, \beta_{1}), U_{2}(\theta_{2}, \alpha_{2}, \beta_{2})) = \langle \Psi|M_{i}|\Psi\rangle,
\end{equation}
where $M_{i}$ for $i=1,2$ are the measurement operators determined by the bimatrix payoffs $\Delta_{ij} = (a_{ij}, b_{ij})$,
\begin{equation}
    M_{1} = \sum_{i,j\in {0,1}} a_{ij}|ij\rangle \langle ij|, \quad M_{2} = \sum_{i,j\in {0,1}} b_{ij}|ij\rangle \langle ij|.
\end{equation}
Using formula (\ref{payoffM}), we can determine the explicit form of the pair of players' payoffs,
\begin{align}\label{generalEWLpayoff}
    &(u_{1}, u_{2})(U_{1}(\theta_{1}, \alpha_{1}, \beta_{1}), U_{2}(\theta_{2}, \alpha_{2}, \beta_{2})) \nonumber\\ &\quad = \Delta_{00}\left(\cos(\alpha_{1} + \alpha_{2})\cos\frac{\theta_{1}}{2}\cos\frac{\theta_{2}}{2} + \sin(\beta_{1} + \beta_{2})\sin\frac{\theta_{1}}{2}\sin\frac{\theta_{2}}{2}\right)^2 \nonumber\\
    &\quad + \Delta_{01}\left(\cos(\alpha_{1} - \beta_{2})\cos\frac{\theta_{1}}{2}\sin\frac{\theta_{2}}{2} + \sin(\alpha_{2} - \beta_{1})\sin\frac{\theta_{1}}{2}\cos\frac{\theta_{2}}{2}\right)^2 \nonumber \\ 
    &\quad + \Delta_{10}\left(\sin(\alpha_{1} - \beta_{2})\cos\frac{\theta_{1}}{2}\sin\frac{\theta_{2}}{2} + \cos(\alpha_{2} - \beta_{1})\sin\frac{\theta_{1}}{2}\cos\frac{\theta_{2}}{2}\right)^2 \nonumber \\ 
    &\quad + \Delta_{11}\left(\sin(\alpha_{1} + \alpha_{2})\cos\frac{\theta_{1}}{2}\cos\frac{\theta_{2}}{2} - \cos(\beta_{1} + \beta_{2})\sin\frac{\theta_{1}}{2}\sin\frac{\theta_{2}}{2}\right)^2.
\end{align}

\begin{figure}[t]
    \centering 
\includegraphics[width=4.5in]{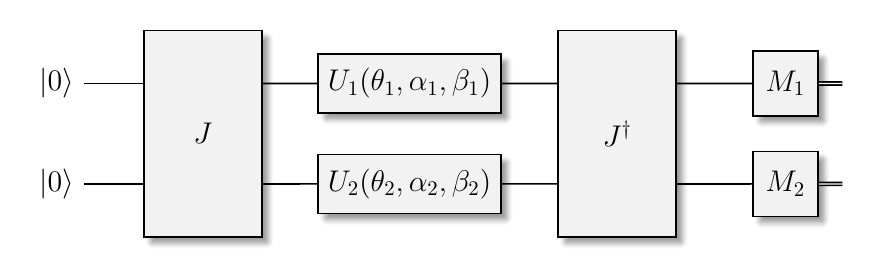}
\caption{The EWL scheme}
\end{figure}

The dependence of payoffs on players' strategies is not a one-to-one relationship - multiple strategies can result in the same payoff combinations. Below are examples of payoff-equivalent strategies in the EWL game.
\begin{example}\label{ex2}
A natural example of payoff equivalent strategies in the EWL scheme is given by unitary operations differing by a global phase factor. i.e., if $U \in \mathsf{SU}(2)$ then $e^{i\delta}U \in [U]$. This property arises from the construction of the payoff functions (\ref{payoffM}) in the EWL scheme, which is based on a quantum measurement.
\end{example}
\begin{example}\label{ex3}
Let us consider unitary operators of the form:
\begin{equation}
U_{1} = U\left(\frac{\pi}{2}, \frac{\pi}{2}, \frac{\pi}{2}\right) = \frac{1}{\sqrt{2}}\begin{pmatrix}
i & -1 \\ 
1 & -i
\end{pmatrix}, \quad U_{2} =U\left(\frac{\pi}{2}, \frac{3\pi}{2}, \frac{\pi}{2}\right) = \frac{1}{\sqrt{2}}\begin{pmatrix}
-i & -1 \\ 
1 & i
\end{pmatrix}.
\end{equation}
Then $U_{1}$ and $U_{2}$ are payoff equivalent for player 1 in the EWL game with strategy sets $S_{1} = S_{2} = \{I, iX, U_{1}, U_{2}\}$ as $u_{1}(U_{1}, s_{2}) = u_{1}(U_{2}, s_{2})$ for any $s_{2} \in S_{2}$. Indeed, 
\begin{equation}
\begin{aligned}
&u_{1}(U_{1}, I) = u_{1}(U_{2}, I) = \frac{a_{01} + a_{11}}{2}, \\
&u_{1}(U_{1}, iX) = u_{2}(U_{2}, iX) = \frac{a_{00} + a_{10}}{2} \\ 
&u_{1}(U_{1}, U_{1}) = u_{2}(U_{2}, U_{1}) = \frac{a_{00}+a_{01} + a_{10} + a_{11}}{4} \\ 
&u_{1}(U_{1}, U_{2}) = u_{2}(U_{2}, U_{2}) = \frac{a_{00}+a_{01} + a_{10} + a_{11}}{4}.
\end{aligned}
\end{equation}

\end{example}
In the EWL scheme, one can therefore identify payoff equivalent profiles. Due to the form of the payoff function (\ref{generalEWLpayoff}) by applying trigonometric reduction formulas, we can easily show that each strategy profile determines a class of strategy pairs generating the same payoff vector. We can formulate this property with the following lemma:
\begin{lemma}\label{lemma1} 
Let $(u_{1}, u_{2})$ be a payoff vector given by (\ref{generalEWLpayoff}). Then
\begin{itemize}
\item adding $\pi$ to two values of $\alpha_{1}, \alpha_{2}, \beta_{1}, \beta_{2}$
\item adding $\pi/2$ to all values of $\alpha_{1}, \alpha_{2}, \beta_{1}, \beta_{2}$
\end{itemize}
do not change the payoff vector.
\end{lemma}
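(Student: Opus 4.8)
\textit{Proof proposal.}
The plan is to verify the two claimed invariances directly from the closed-form payoff expression (\ref{generalEWLpayoff}), which is a sum of four terms of the shape $\Delta_{ij}(\cdots)^2$, by checking that each bracketed expression is either unchanged or merely negated under the stated shifts; since each bracket enters squared, negation is harmless, and since the $\Delta_{ij}$ are fixed, termwise invariance of the squared brackets gives invariance of the whole payoff vector. So the proof reduces to bookkeeping on the arguments $\alpha_1+\alpha_2$, $\beta_1+\beta_2$, $\alpha_1-\beta_2$, $\alpha_2-\beta_1$, and $\alpha_1+\alpha_2$ again, which are exactly the angle combinations appearing in the four brackets.

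For the first bullet, I would argue that adding $\pi$ to exactly two of the four angles $\alpha_1,\alpha_2,\beta_1,\beta_2$ changes each of the relevant sums/differences by $0$ or $\pm\pi$: if the two chosen angles are $\{\alpha_1,\alpha_2\}$ or $\{\beta_1,\beta_2\}$, then $\alpha_1+\alpha_2$ (resp. $\beta_1+\beta_2$) shifts by $2\pi$, i.e. not at all, while the two "mixed" combinations $\alpha_1-\beta_2$ and $\alpha_2-\beta_1$ each shift by $\pm\pi$; the remaining cases (one angle from each group) shift the other combinations by $\pm\pi$ and leave a matched pair fixed. In every case each of the five bracketed quantities is sent to $\pm$(itself) — using $\cos(x+\pi)=-\cos x$, $\sin(x+\pi)=-\sin x$ — so after squaring nothing changes. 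It is worth enumerating the $\binom42=6$ choices of the pair (or, better, noting the problem is symmetric enough that one only needs the two representative cases "both from $\{\alpha_1,\alpha_2\}$-type" and "one from each") to make the argument watertight.

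For the second bullet, adding $\pi/2$ to all of $\alpha_1,\alpha_2,\beta_1,\beta_2$ shifts $\alpha_1+\alpha_2$ and $\beta_1+\beta_2$ by $\pi$ (harmless after squaring, as above) but shifts $\alpha_1-\beta_2$ and $\alpha_2-\beta_1$ by $0$. Hence the first and fourth brackets flip sign and the middle two are untouched; again squaring kills the sign change, so the payoff vector is unchanged. The main (very mild) obstacle is purely organizational: making sure the case analysis for the first bullet is exhaustive and that I have correctly tracked which combination lives in which of the four squared terms — there is no real analytic difficulty, only the risk of a sign or indexing slip, so I would present a small table of the six pair-choices against the five angle combinations and their induced shifts, then invoke $f(x)^2 = f(x\pm\pi)^2$ for $f\in\{\sin,\cos\}$ to conclude. $\blacksquare$
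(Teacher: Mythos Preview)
Your proposal is correct and follows essentially the same approach as the paper: both rely on the identities $\sin(\phi\pm\pi)=-\sin\phi$, $\cos(\phi\pm\pi)=-\cos\phi$ together with the fact that each angle-dependent factor in (\ref{generalEWLpayoff}) enters only through a square, so that a global sign flip inside any bracket is harmless. The paper's proof is a one-line remark to this effect, whereas you spell out the case analysis more carefully (a small slip: there are four squared brackets, not five, though you correctly note that the combination $\alpha_1+\alpha_2$ reappears in the last one); your description of the ``remaining cases'' for the first bullet is slightly imprecise---e.g.\ shifting $\alpha_1$ and $\beta_1$ by $\pi$ moves \emph{all four} angle sums/differences by $\pm\pi$, not just two---but the conclusion that each bracket is sent to $\pm$(itself) is indeed true in all six cases, and your plan to tabulate them would close that gap.
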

\begin{proof}
In order to demonstrate this property, it is important to observe that the symmetries $\sin(\phi\pm\pi)=-\sin(\phi)$ and $\cos(\phi\pm\pi)=-\cos(\phi)$ cause formula (\ref{generalEWLpayoff}) to remain unchanged for any of this substitutions.
\end{proof}
\begin{corollary}
Unitary strategies $U(\theta, \alpha, \beta)$ and $U(\theta, \alpha\pm \pi, \beta \pm \pi)$ are payoff equivalent. 
\end{corollary}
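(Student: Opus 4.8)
The plan is to read the statement off directly from Lemma \ref{lemma1}. The key observation is that the parameter pair $(\alpha_1, \beta_1)$ attached to player~1's operator constitutes "two of the four values $\alpha_1, \alpha_2, \beta_1, \beta_2$" appearing in the first bullet of Lemma \ref{lemma1}. Hence, for any $\theta_1, \alpha_1, \beta_1$ and any opponent operator $U_2(\theta_2, \alpha_2, \beta_2)$, replacing $U_1(\theta_1, \alpha_1, \beta_1)$ by $U_1(\theta_1, \alpha_1 + \pi, \beta_1 + \pi)$ leaves the payoff vector $(u_1, u_2)$ in (\ref{generalEWLpayoff}) unchanged. Since this holds for every $s_{-1} = U_2(\theta_2, \alpha_2, \beta_2)$ and for both coordinates $j = 1, 2$, Definition \ref{meyerson} yields that $U(\theta_1, \alpha_1, \beta_1)$ and $U(\theta_1, \alpha_1 + \pi, \beta_1 + \pi)$ are payoff equivalent; by the symmetry of (\ref{generalEWLpayoff}) under exchange of the players the same argument applies with the pair $(\alpha_2, \beta_2)$.

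It remains only to reconcile the "$\pm$" in the statement with the "$+\pi$" supplied by Lemma \ref{lemma1}. The entries of $U(\theta, \alpha, \beta)$ in (\ref{Umatrix}) depend on $\alpha$ and $\beta$ only through $e^{\pm i\alpha}$ and $e^{\pm i\beta}$, and $e^{i(\gamma - \pi)} = e^{i(\gamma + \pi)}$ for every real $\gamma$, so as matrices $U(\theta, \alpha - \pi, \beta - \pi) = U(\theta, \alpha + \pi, \beta + \pi)$. Thus adding $\pi$ and subtracting $\pi$ produce the same operator, and the two sign choices need not even be correlated; this disposes of all four cases at once.

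An alternative, equally short route avoids Lemma \ref{lemma1}: a one-line computation from (\ref{Umatrix}) gives $U(\theta, \alpha + \pi, \beta + \pi) = -\,U(\theta, \alpha, \beta) = e^{i\pi}\,U(\theta, \alpha, \beta)$, so the two operators differ by a global phase and are payoff equivalent by Example \ref{ex2}. Either way there is no genuine obstacle here; the only point deserving a word of care is the bookkeeping of the $\pm$ signs together with the remark that payoff equivalence in the sense of Definition \ref{meyerson} demands the equality against \emph{all} opponent strategies and for \emph{both} payoff functions — both automatic, since Lemma \ref{lemma1} (resp. Example \ref{ex2}) holds identically in the remaining parameters.
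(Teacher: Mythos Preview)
Your proof is correct and matches the paper's intent: the corollary is stated there without proof, as an immediate consequence of Lemma~\ref{lemma1}, and your first argument spells out precisely that deduction. Your alternative via $U(\theta,\alpha+\pi,\beta+\pi)=-U(\theta,\alpha,\beta)$ and Example~\ref{ex2} is a nice bonus that the paper does not mention.
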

  The research \cite{frackiewicz_strong_2016} demonstrated that implementing the $\mathsf{SU}(2)$ strategies within the EWL scheme guarantees the quantum model's invariance under isomorphic transformations in the classical $2\times 2$ game. The proof of this property enables the formulation of a proposition defining a transformation that relates EWL games corresponding to isomorphic $2\times 2$ games. 
\begin{proposition}\label{propstarapraca}
Let $\Gamma$ be a $2\times 2$ game, and let $\Gamma^i$ be its isomorphic counterparts (\ref{gamma1})--(\ref{gamma3}). Let $\Gamma_{EWL} = (\{1,2\}, \{S_{1}, S_{2}\}, \{u_{1}, u_{2}\})$ be the EWL scheme for $\Gamma$ and $\Gamma^i_{EWL} = (\{1,2\}, \{S_{1}, S_{2}\}, \{u^i_{1}, u^i_{2}\})$ be the EWL schemes for $\Gamma^i$, where $S_{1} = S_{2} = \mathsf{SU}(2)$. If $\varphi\colon SU(2) \to SU(2)$ is a bijection given by
\begin{equation}\label{bijekcja}
\varphi(U(\theta, \alpha, \beta)) = U(\pi-\theta, 2\pi - \beta, \pi - \alpha),
\end{equation}
then 
\begin{align}\label{starapraca}
(u_{1}, u_{2})(U_{1}, U_{2})  = (u^1_{1}, u^1_{2})(\varphi(U_{1}), U_{2}) =(u^2_{1}, u^2_{2})(U_{1}, \varphi(U_{2})) = (u^3_{1}, u^3_{2})(\varphi(U_{1}), \varphi(U_{2})).
\end{align}
\end{proposition}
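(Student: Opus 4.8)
The plan is to verify the three equalities by direct substitution into the closed-form payoff expression (\ref{generalEWLpayoff}), taking advantage of the fact that passing from $\Gamma$ to one of its isomorphic counterparts only permutes the four payoff pairs. Reading off the bimatrices (\ref{gamma1})--(\ref{gamma3}), the payoff entries of $\Gamma^1$ are $\Delta^1_{00}=\Delta_{10}$, $\Delta^1_{01}=\Delta_{11}$, $\Delta^1_{10}=\Delta_{00}$, $\Delta^1_{11}=\Delta_{01}$; those of $\Gamma^2$ are $\Delta^2_{00}=\Delta_{01}$, $\Delta^2_{01}=\Delta_{00}$, $\Delta^2_{10}=\Delta_{11}$, $\Delta^2_{11}=\Delta_{10}$; and those of $\Gamma^3$ are $\Delta^3_{00}=\Delta_{11}$, $\Delta^3_{01}=\Delta_{10}$, $\Delta^3_{10}=\Delta_{01}$, $\Delta^3_{11}=\Delta_{00}$. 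Consequently each $(u^i_1,u^i_2)$ is given by formula (\ref{generalEWLpayoff}) with every $\Delta_{kl}$ replaced by the corresponding $\Delta^i_{kl}$.

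For the first identity I would insert $\varphi(U_1)=U(\pi-\theta_1,2\pi-\beta_1,\pi-\alpha_1)$ into the $\Gamma^1$-payoff, i.e.\ perform the substitution $\theta_1\mapsto\pi-\theta_1$, $\alpha_1\mapsto 2\pi-\beta_1$, $\beta_1\mapsto\pi-\alpha_1$ while keeping $\theta_2,\alpha_2,\beta_2$ fixed. The half-angle relations $\cos\tfrac{\pi-\theta_1}{2}=\sin\tfrac{\theta_1}{2}$ and $\sin\tfrac{\pi-\theta_1}{2}=\cos\tfrac{\theta_1}{2}$ swap the two trigonometric factors carried by player~$1$, while $\cos(2\pi-x)=\cos x$, $\sin(2\pi-x)=-\sin x$, $\cos(\pi-x)=-\cos x$ and $\sin(\pi-x)=\sin x$ turn each phase combination occurring in the $\Gamma^1$-formula into $\pm$ the phase combination required at the matching place of (\ref{generalEWLpayoff}); the overall signs are harmless because every bracket is squared. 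Running through the four terms, the bracket multiplying $\Delta^1_{00}=\Delta_{10}$ becomes the $\Delta_{10}$-bracket of (\ref{generalEWLpayoff}), the one multiplying $\Delta^1_{10}=\Delta_{00}$ becomes the $\Delta_{00}$-bracket, and similarly $\Delta^1_{01},\Delta^1_{11}$ recover the $\Delta_{11}$- and $\Delta_{01}$-brackets; hence $(u^1_1,u^1_2)(\varphi(U_1),U_2)=(u_1,u_2)(U_1,U_2)$. The second identity follows from exactly the same computation with the two players interchanged: one uses the column permutation $\Delta^2$ and substitutes $\varphi(U_2)$ for $U_2$.

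For the third identity I would not repeat the calculation but instead observe that $\Gamma^3$ arises from $\Gamma^2$ by interchanging its rows (and from $\Gamma^1$ by interchanging its columns). Applying the already-proved row statement to the game $\Gamma^2$ gives $(u^3_1,u^3_2)(\varphi(V_1),V_2)=(u^2_1,u^2_2)(V_1,V_2)$ for all $V_1,V_2\in\mathsf{SU}(2)$; specializing to $V_1=U_1$, $V_2=\varphi(U_2)$ and then invoking the second identity yields $(u^3_1,u^3_2)(\varphi(U_1),\varphi(U_2))=(u^2_1,u^2_2)(U_1,\varphi(U_2))=(u_1,u_2)(U_1,U_2)$. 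Equivalently, one may substitute $\varphi(U_1)$ and $\varphi(U_2)$ simultaneously into the $\Delta^3$-formula and repeat the trigonometric reduction.

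I do not anticipate a real obstacle: the proof is a finite, mechanical check, and the only delicate point is the bookkeeping --- tracking which payoff pair occupies which bracket after the permutation $\Delta\mapsto\Delta^i$ and confirming that the substituted trigonometric expression reproduces precisely the intended bracket of (\ref{generalEWLpayoff}). It is also worth keeping in mind that the parameter map underlying $\varphi$ carries $[0,\pi]\times[0,2\pi)^2$ onto itself once $2\pi-\beta$ and $\pi-\alpha$ are reduced modulo $2\pi$, so $\varphi$ is genuinely a self-bijection of $\mathsf{SU}(2)$ up to the global-phase identification noted in Example~\ref{ex2}, which is what legitimizes speaking of the induced transformation between the EWL games.
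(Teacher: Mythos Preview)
Your proposal is correct and follows essentially the same approach as the paper: determine the payoff function for the permuted bimatrix $\Gamma^i$ by relabeling the $\Delta_{kl}$ in (\ref{generalEWLpayoff}), substitute the $\varphi$-transformed strategy, and reduce trigonometrically. The paper carries this out only for the first equality and declares the rest analogous, whereas you give more detail on the trigonometric identities and derive the third equality by composition rather than by repetition; this is a minor organizational difference, not a different method.
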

\begin{proof}
Without loss of generality, we show the first equality of (\ref{starapraca}). To find a general formula for $(u^1_{1}, u^1_{2})(U_{1}, U_{2})$, we determine the payoff function (\ref{generalEWLpayoff}) for bimatrix (\ref{gamma1}). It is equivalent to replacing $\Delta_{00}, \Delta_{01}, \Delta_{10}$ and  $\Delta_{11}$ with $\Delta_{10}, \Delta_{11}, \Delta_{00}$ and  $\Delta_{01}$, respectively. Then, substituting a strategy profile $(\varphi(U_{1}), U_{2})$ to $(u^1_{1}, u^1_{2})$ we obtain the desired equality. 
\end{proof}
Proposition~\ref{propstarapraca} clearly shows why the EWL scheme with $\mathsf{SU}(2)$ is invariant under isomorphic transformations of the classical game. In that case, the image of the transformation $\varphi$ is always an element of $\mathsf{SU}(2)$, i.e., 
\begin{equation}\label{image}
\varphi(U) \in \mathsf{SU}(2) \quad \text{for each} \quad U \in \mathsf{SU}(2). 
\end{equation}
This property does not hold for many subsets of $\mathsf{SU}(2)$. In particular, the set of strategies 
\begin{equation}
\{U(\theta, \alpha, 0)\colon \theta \in [0,\pi], \alpha \in [0,2\pi)\},
\end{equation}
commonly considered in the literature \cite{du_experimental_2002,chen_quantum_2003,li_quantum_2012,nawaz_strategic_2013,naskar_quantum_2021,anand_solving_2020}, is not closed under $\varphi$. In consequence, a pair of classical games that are isomorphic (i.e., a pair of games that are indistinguishable from the game theory point of view) may imply completely different quantum extensions of the game according to the EWL scheme. For example, it was shown in \cite{frackiewicz_strong_2016} that for games that differ only in the numbering of strategies of one of the players, the corresponding EWL games exhibited completely different Nash equilibria. In other words, the lack of invariance under isomorphic transformations of the classical game causes the EWL quantum extension to depend on the way in which the players' strategies are numbered.
On the other hand, a different two-parameter description of the quantum strategy space, as referenced e.g. in \cite{frackiewicz_nash_2022}, remains invariant under isomorphic transformations of the classical game. Therefore, the problem of determining the classes of unitary operators that preserve the invariance of the EWL scheme is extremely important for quantum game theory.

It turns out that property (\ref{image}) can be effectively utilized in quantum games with a finite number of unitary strategies. Invariance of the EWL scheme with a finite strategy set $S$ under isomorphic transformation can be guaranteed by assumption that $S$ is closed under $\varphi$. In fact, we can weaken this requirement since a given unitary strategy determines a class of payoff equivalent strategies (see also Examples \ref{ex2} and \ref{ex3}). As a result, for $U_{i} \in S = \{U_{1}, U_{2}, \dots, U_{n}\}$ we require that $\varphi(U_{i}) \in [U_{j}]$, where $U_{j} \in S$ rather than $\varphi(U_{i}) \in S$. It is stated by the following proposition:
\begin{proposition}\label{mainproposition}
Let there be a game $\Gamma$ and its isomorphic counterpart $\Gamma^i$, $i=1,2,3$. Let $\Gamma_{EWL}$ and $\Gamma_{EWL}^i$ be the EWL models for $\Gamma$ and $\Gamma^i$ respectively, in which the set of unitary operations for each player is $S = \{U_{1}, U_{2}, \dots, U_{m}\}$. Then, if 
\begin{equation}\label{zilorazowy}
\{[U_{j}]\colon U_{j} \in S\} = \{[\varphi(U_{j})]\colon U_{j} \in S\}
\end{equation}
the games 
$\Gamma_{EWL}$, $\Gamma_{EWL}^i$ are isomorphic. 
\end{proposition}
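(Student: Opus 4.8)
The plan is to reduce the statement to an application of Proposition~\ref{propstarapraca} combined with the observation that the bijection $\varphi$ used there, when restricted to $S$, descends to a bijection on the set of payoff-equivalence classes. I will treat the three cases $i=1,2,3$ in parallel, since they differ only in whether $\varphi$ is applied to the first player's strategy, the second player's strategy, or both; I spell out the case $i=1$ and remark that the others are analogous.

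First I would set up the candidate isomorphism. By Proposition~\ref{propstarapraca}, for every $U_1,U_2\in\mathsf{SU}(2)$ we have $(u_1,u_2)(U_1,U_2)=(u^1_1,u^1_2)(\varphi(U_1),U_2)$. The hypothesis \eqref{zilorazowy} says precisely that the map $[U_j]\mapsto[\varphi(U_j)]$ is a well-defined bijection from $\{[U_j]:U_j\in S\}$ onto itself: it is well-defined because the right-hand side of \eqref{zilorazowy} is a set of equivalence classes and $\varphi$ is a bijection on $\mathsf{SU}(2)$, so distinct classes on the left map to distinct classes on the right, and it is surjective by the equality of the two sets. For each $j$ I would therefore \emph{choose} a representative $V_j\in S$ with $[\varphi(U_j)]=[V_j]$, and define $\psi_1\colon S\to S$ by sending (a chosen representative of) each class $[U_j]$ to $V_j$, extended to all of $S$ by respecting classes; for the case $i=1$ the needed collection of bijections is $(\psi_1,\mathrm{id}_S)$, for $i=2$ it is $(\mathrm{id}_S,\psi_1)$, and for $i=3$ it is $(\psi_1,\psi_1)$. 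One must check $\psi_1$ is a bijection of the finite set $S$: it permutes the equivalence classes by the bijection from \eqref{zilorazowy}, and one fixes once and for all a bijection within each class and its image class, which is possible since payoff-equivalent strategies form blocks that $\psi_1$ maps block-to-block.

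Next I would verify the isomorphism condition \eqref{isocondition} for this collection. Take any profile $(U_a,U_b)\in S\times S$. Then
\begin{align*}
u^1_i\bigl(\psi_1(U_a),U_b\bigr) &= u^1_i\bigl(V,U_b\bigr),
\end{align*}
where $V\in[\varphi(U_a)]$, and since payoff-equivalent strategies give identical payoffs against every opponent strategy and for every player (Definition~\ref{meyerson}), this equals $u^1_i(\varphi(U_a),U_b)$, which by Proposition~\ref{propstarapraca} equals $u_i(U_a,U_b)$. Hence $u_i(U_a,U_b)=u^1_i(\psi_1(U_a),\mathrm{id}(U_b))$ for $i=1,2$ and all profiles, which is exactly \eqref{isocondition}, so $\Gamma_{EWL}$ and $\Gamma^1_{EWL}$ are strongly isomorphic. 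The cases $i=2,3$ follow identically using the second and third equalities in \eqref{starapraca} respectively, replacing $\varphi(U_b)$ by a chosen representative in $S$ exactly as above.

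The one genuine subtlety — the main obstacle worth being careful about — is that \eqref{zilorazowy} is an equality of \emph{sets of classes}, not of classes indexed by $j$; in particular $\varphi$ need not send $[U_j]$ to $[U_j]$, and a priori several $U_j$ could lie in the same class, so I must be sure the induced map on classes is genuinely a bijection and that the chosen within-block bijections can be assembled into a single well-defined bijection $\psi_1$ of $S$. This is where finiteness of $S$ and the fact that payoff equivalence partitions $S$ into blocks are used: $\varphi$ permutes the blocks (by \eqref{zilorazowy}), each block has the same structure from the payoff point of view, and one simply picks any bijection between a block and its image block; composing the block-permutation with these intra-block bijections yields $\psi_1$. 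Once this bookkeeping is in place, the payoff identity is immediate from Proposition~\ref{propstarapraca} and Definition~\ref{meyerson}, and nothing further is needed.
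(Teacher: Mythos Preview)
Your proposal is correct and follows essentially the same route as the paper's proof: both arguments combine Proposition~\ref{propstarapraca} with the hypothesis~\eqref{zilorazowy} that $\varphi$ permutes the payoff-equivalence classes of $S$, and both then invoke Definition~\ref{meyerson} to replace $\varphi(U_j)$ by a representative in $S$. The only organizational difference is that the paper chains through an intermediate game $\Gamma_{EWL}$ with strategy set $S'=\{\varphi(U_1),\dots,\varphi(U_m)\}$ and uses transitivity of isomorphism, whereas you construct the bijection $\psi_1\colon S\to S$ directly and verify condition~\eqref{isocondition}; the underlying content is the same, and your explicit discussion of the block-structure bookkeeping is, if anything, slightly more careful than the paper's ``exactly one $k$'' phrasing.
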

\begin{proof}
Consider $\Gamma_{EWL}$ with a strategy set $S = \{U_{1}, U_{2}, \dots, U_{m}\}$ and payoff functions $(u_{1}, u_{2})$. Note that any game $\Gamma'_{EWL}$ with payoff functions $(u_{1}, u_{2})$ and strategy set  $\{V_{1}, V_{2}, \dots, V_{m}\}$ such that $V_{j} \in [U_{j}]$ is isomorphic to $\Gamma_{EWL}$. Further, from condition (\ref{zilorazowy}) it follows that for each $j\in \{1,2, \dots, m\}$ there exist exactly one $k\in \{1,2, \dots, m\}$ such that $\varphi(U_{j}) \in [U_{k}]$. As a result, games $\Gamma_{EWL}$ with strategy sets $S$ i $S'= \{\varphi(U_{1}), \varphi(U_{2}), \dots, \varphi(U_{m}) \}$ are isomorphic. 

From the first equality of formula (\ref{starapraca}), it follows that $\Gamma_{EWL}$ with the strategy set $S$ is isomorphic to $\Gamma^1_{EWL}$ with the strategy set $S'$. Therefore the games $\Gamma_{EWL}$ and $\Gamma^1_{EWL}$ with the strategy set $S'$ are isomorphic. From condition (\ref{zilorazowy}), it follows that $S'$ forms a set of strategies equivalent in terms of payoffs with the elements of the set $S$. This means that the games $\Gamma_{EWL}$ and $\Gamma^1_{EWL}$ with the strategy set $S$ are also isomorphic. The remaining cases are shown analogously. 
\end{proof}
In what follows, we apply Proposition~\ref{mainproposition} in the construction of the EWL game with a three-element set of strategies. The problem of the EWL game with one fixed unitary strategy $U$ was considered in \cite{frackiewicz_permissible_2024}. We determined a special system of equations that enabled us to obtain $U$, and as a result to construct the EWL game that is invariant under isomorphic transformations of the classical game. The system of equations was determined by considering the EWL scheme for all isomorphic counterparts of a given $2\times 2$ game. The equations can also be obtained by directly applying (\ref{zilorazowy}). Indeed, assuming the set of strategies $S=\{I, iX, U\}$, where $U \in \mathsf{SU}(2)$ is a fixed unitary operator, we get
\begin{equation}\label{3siloraz}
\{[I], [iX], [U]\} = \{[\varphi(I)], [\varphi(iX)], [\varphi(U)]\}. 
\end{equation}
Note that for $I = U(0,0,0)$ we have $\varphi(I) = U(\pi, 0,\pi) = -U(\pi, 0,0) = -iX \in [iX]$. Similarly, $\varphi(iX) \in [I]$. As a result, Eq.~(\ref{3siloraz}) holds if $U \in [\varphi(U)]$. From (\ref{bijekcja}) it follows that $U(\theta, \alpha, \beta) \in [U(\pi-\theta, 2\pi - \beta, \pi - \alpha)]$. This, in turn, means that the unitary operation $U(\theta, \alpha, \beta)$ is payoff equivalent to $U(\pi-\theta, 2\pi - \beta, \pi - \alpha)$. This implies equations 
\begin{equation}\label{rownanieS}
u(U(\pi-\theta, 2\pi - \beta, \pi - \alpha), s) = u(U(\theta, \alpha, \beta)), s)~\text{for}~s\in \{I, iX, U(\theta, \alpha, \beta)\}.
\end{equation}
From (\ref{rownanieS}), we obtain exactly the same equations as (48)-(50) in \cite{frackiewicz_permissible_2024}.
\section{Determining the criteria for valid unitary matrices $U_1$ and $U_2$}\label{determining_the_criteria}
In this section, we use Proposition~\ref{mainproposition} to determine unitary strategies $U_1$ and $U_2$ such that the EWL game with a finite set of strategies $\{I, iX, U_{1}, U_{2}\}$ remains invariant under isomorphic transformations of the classical game. Since $\varphi(I) \in [iX]$ and $\varphi(iX) \in [I]$, Eq.~(\ref{zilorazowy}) is satisfied provided two conditions are met
\begin{enumerate}
\item $\varphi(U_{1}) \in [U_{2}]$ and $\varphi(U_{2}) \in [U_{1}],$
\item $\varphi(U_{1}) \in [U_{1}]$ and $\varphi(U_{2}) \in [U_{2}].$ 
\end{enumerate}
Let us first consider case 1. According to Definition~\ref{meyerson}, the condition $\varphi(U_{1}) \in [U_{2}]$ leads to the following equations:
\begin{align}
u(U_{1}(\pi - \theta_1, 2\pi - \beta_1, \pi - \alpha_1), I) &= u(U_{2}(\theta_2, \alpha_2, \beta_2), I) \label{cnd1},\\
u(U_{1}(\pi - \theta_1, 2\pi - \beta_1, \pi - \alpha_1), iX) &= u(U_{2}(\theta_2, \alpha_2, \beta_2), iX) \label{cnd2}, \\
u(U_{1}(\pi - \theta_1, 2\pi - \beta_1, \pi - \alpha_1), U_{1}(\theta_1, \alpha_1, \beta_1)) &= u(U_{2}(\theta_2, \alpha_2, \beta_2), U_{1}(\theta_1, \alpha_1, \beta_1)), \label{cnd3} \\
u(U_{1}(\pi - \theta_1, 2\pi - \beta_1, \pi - \alpha_1), U_{2}(\theta_2, \alpha_2, \beta_2)) &= u(U_{2}(\theta_2, \alpha_2, \beta_2), U_{2}(\theta_2, \alpha_2, \beta_2)). \label{cnd4}
\end{align}
Similarly, condition $\varphi(U_{2}) \in [U_{1}]$ leads to:
\begin{align}
u(U_{2}(\pi - \theta_2, 2\pi - \beta_2, \pi - \alpha_2), I) &= u(U_{1}(\theta_1, \alpha_1, \beta_1), I), \label{cnd5}\\
u(U_{2}(\pi - \theta_2, 2\pi - \beta_2, \pi - \alpha_2), iX) &= u(U_{1}(\theta_1, \alpha_1, \beta_1), iX), \label{cnd6}\\
u(U_{2}(\pi - \theta_2, 2\pi - \beta_2, \pi - \alpha_2), U_{1}(\theta_1, \alpha_1, \beta_1)) &= u(U_{1}(\theta_1, \alpha_1, \beta_1), U_{1}(\theta_1, \alpha_1, \beta_1)), \label{cnd7}\\
u(U_{2}(\pi - \theta_2, 2\pi - \beta_2, \pi - \alpha_2), U_{2}(\theta_2, \alpha_2, \beta_2)) &= u(U_{1}(\theta_1, \alpha_1, \beta_1), U_{2}(\theta_2, \alpha_2, \beta_2)). \label{cnd8}
\end{align}
Formulas (\ref{cnd1}), (\ref{cnd2}), (\ref{cnd5}) and (\ref{cnd6}) lead to three conditions relating the parameters of the first strategy to the second (or to their equivalent forms):
\begin{align}
   \sin^2\frac{\theta_2}{2} &= \cos^2\frac{\theta_1}{2}, \label{theta12} \\
    \sin^2\alpha_2 &= \sin^2\beta_1, \label{a2b1}  \\
    \sin^2\beta_2 &= \sin^2\alpha_1. \label{a1b2}
\end{align}
Note that (\ref{theta12})-(\ref{a1b2}) are respectively equivalent to 
\begin{align}
 \theta_2 &= \pi - \theta_1,  \label{theta12s} \\
 \alpha_2 &= n \pi \pm \beta_1 \label{a2b1s},\\
 \beta_2 &= l \pi \pm \alpha_1 \label{a1b2s},
\end{align}
where $\theta_1,\theta_2 \in [0,\pi]$ and $\alpha_1,\beta_1, \alpha_2,\beta_2 \in [0,2\pi)$ and $n, l \in \mathbb{Z}$. The following section will explore the repercussions of the remaining equations (\ref{cnd3}), (\ref{cnd4}), (\ref{cnd7}) and (\ref{cnd8}), and how they may fine-tune the conditions (\ref{theta12s})-(\ref{a1b2s}) in special cases. 

In the event that $\theta_1=0$, $\theta_2=\pi$, these equations reduce to simple relationships
\begin{equation}\label{sin2b}   \sin^2(2\beta_2)=\sin^2(2\alpha_1)=\sin^2(\alpha_1-\beta_2),
\end{equation}
which are met by $\alpha_1+\beta_2= n \pi$, $n \in \mathbb{Z}$, i.e., particular solutions of the equation (\ref{a1b2s}), $\alpha_2, \beta_1 \in [0, 2\pi)$ take arbitrary values in this case. In case when $\theta_1=\pi$, $\theta_2=0$, the parameters must obey $\alpha_2+\beta_1= n \pi$, $n \in \mathbb{Z}$ - the  particular form of (\ref{a2b1s}), and $\alpha_1, \beta_2 \in [0, 2\pi)$ are arbitrary. 

In continued analysis, we will proceed with the assumption that $0< \theta_2=\pi - \theta_1<\pi$. In this case (\ref{cnd3}, \ref{cnd4}, \ref{cnd7}, \ref{cnd8}) lead to a system of 16 equations
\begin{align}
&\left(\cos(\alpha_1-\beta_1)+\sin(\alpha_1-\beta_1)   \right)^2 =  \left(\cos(\alpha_1+\alpha_2)+\sin(\beta_1+\beta_2)\right)^2, \label{eq1}\\
&\left(\sin(\alpha_1-\beta_1)-\cos(\alpha_1-\beta_1)\right)^2 
=\left(\sin(\alpha_1+\alpha_2)-\cos(\beta_1+\beta_2)\right)^2, \\ 
&\left( \cos(\beta_1+\beta_2)-\sin(\alpha_1+\alpha_2)\right)^2 =\left( \cos(\alpha_2-\beta_2)+\sin(\alpha_2-\beta_2)\right)^2, \\ 
&\left( \sin(\beta_1+\beta_2)-\cos(\alpha_1+\alpha_2)\right)^2 =\left( \sin(\alpha_2-\beta_2)+\cos(\alpha_2-\beta_2)\right)^2, \\ 
&\left(\cos(\beta_1+\beta_2)-\sin(\alpha_1+\alpha_2)\right)^2  = \left( \cos(\alpha_1-\beta_1)+\sin(\alpha_1-\beta_1)\right)^2, \\ 
&\left( \sin(\beta_1+\beta_2)-\cos(\alpha_1+\alpha_2)\right)^2 = \left( \sin(\alpha_1-\beta_1)+\cos(\alpha_1-\beta_1)\right)^2, \\ 
& \left(\cos(\beta_2-\alpha_2)+\sin(\alpha_2-\beta_2)\right)^2  = \left(\cos(\alpha_1+\alpha_2)+\sin(\beta_1+\beta_2)\right)^2, \\
&\left(\sin(\beta_2-\alpha_2)+\cos(\alpha_2-\beta_2)\right)^2 = \left(\sin(\alpha_1+\alpha_2)-\cos(\beta_1+\beta_2)\right)^2, \\
  &\left( \cos(2\beta_1)\sin^2\frac{\theta_1}{2}-\sin(2\alpha_1)\cos^2\frac{\theta_1}{2}\right)^2
  = \left( \cos(\alpha_2-\beta_1)\sin^2\frac{\theta_1}{2}+\sin(\alpha_1-\beta_2)\cos^2\frac{\theta_1}{2}\right)^2, \label{eq9} \\  
  &\left( \sin(2\beta_1)\sin^2\frac{\theta_1}{2}-\cos(2\alpha_1)\cos^2\frac{\theta_1}{2}\right)^2 = \left( \sin(\alpha_2-\beta_1)\sin^2\frac{\theta_1}{2}+\cos(\alpha_1-\beta_2)\cos^2\frac{\theta_1}{2}\right)^2, \label{eq10} \\ 
&\left( \cos(2\beta_2)\cos^2\frac{\theta_1}{2}-\sin(2\alpha_2)\sin^2\frac{\theta_1}{2}\right)^2 = \left( \cos(\alpha_1-\beta_2)\cos^2\frac{\theta_1}{2}+\sin(\alpha_2-\beta_1)\sin^2\frac{\theta_1}{2}\right)^2, \\ 
 &\left( \sin(2\beta_2)\cos^2\frac{\theta_1}{2}-\cos(2\alpha_2)\sin^2\frac{\theta_1}{2}\right)^2 = \left( \sin(\alpha_1-\beta_2)\cos^2\frac{\theta_1}{2}+\cos(\alpha_2-\beta_1)\sin^2\frac{\theta_1}{2}\right)^2, \\ 
&\left(\cos(\beta_1-\alpha_2)\sin^2\frac{\theta_1}{2}+\sin(\alpha_1-\beta_2)\cos^2\frac{\theta_1}{2}\right)^2
    = \left(\cos(2\alpha_2)\sin^2\frac{\theta_1}{2}+\sin(2\beta_2)\cos^2\frac{\theta_1}{2}\right)^2, \\ 
 &\left(\sin(\beta_1-\alpha_2)\sin^2\frac{\theta_1}{2}+\cos(\alpha_1-\beta_2)\cos^2\frac{\theta_1}{2}\right)^2  = \left(\sin(2\alpha_2)\sin^2\frac{\theta_1}{2}-\cos(2\beta_2)\cos^2\frac{\theta_1}{2}\right)^2, \\ 
 &\left(\cos(\beta_2-\alpha_1)\cos^2\frac{\theta_1}{2}+\sin(\alpha_2-\beta_1)\sin^2\frac{\theta_1}{2}\right)^2 = \left(\cos(2\alpha_1)\cos^2\frac{\theta_1}{2}+\sin(2\beta_1)\sin^2\frac{\theta_1}{2}\right)^2,  \\ 
& \left(\sin(\beta_2-\alpha_1)\cos^2\frac{\theta_1}{2}+\cos(\alpha_2-\beta_1)\sin^2\frac{\theta_1}{2}\right)^2  = \left(\sin(2\alpha_1)\cos^2\frac{\theta_1}{2}-\cos(2\beta_1)\sin^2\frac{\theta_1}{2}\right)^2. \label{eq16} 
\end{align}
It can be deduced from this system of equations and formulas (\ref{a2b1s}) and (\ref{a1b2s}) that
 \begin{align}\label{twoeq}
 &\sin(2\beta_1)\cos(2\alpha_1)= 0, \nonumber \\
 &\sin2(\alpha_1-\beta_1) = 0.
 \end{align}
The last equations lead to two distinct sets of solutions
\begin{align}
&(\alpha_{1}, \beta_{1}) \in \left\{\frac{\pi}{4}, \frac{3\pi}{4}, \frac{5\pi}{4}, \frac{7\pi}{4}\right\} \times \left\{\frac{\pi}{4}, \frac{3\pi}{4}, \frac{5\pi}{4}, \frac{7\pi}{4}\right\}, \label{catBC}\\
&(\alpha_{1}, \beta_{1}) \in \left\{0, \frac{\pi}{2}, \pi, \frac{3\pi}{2}\right\}\times \left\{0, \frac{\pi}{2}, \pi, \frac{3\pi}{2}\right\}, \label{catDE}
\end{align}
with 16 combinations of $\alpha_{1}$ and $\beta_{1}$ in each.

If the parameters correspond to category (\ref{catBC}), then $\alpha_2$ and $\beta_2$ calculated using Eqs (\ref{a2b1s}) and (\ref{a1b2s}), will fall within the same Cartesian product (\ref{catBC}). The system of equations (\ref{eq9})-(\ref{eq16}) contains arguments of the type $(\alpha_1-\beta_2)$ or $(\alpha_2-\beta_1)$. There are four potential cases for their differences: 
\begin{enumerate}\label{fourscenarios}
\item $\alpha_{2} - \beta_{1} = n\pi$, $\alpha_{1} - \beta_{2} = l\pi$ 
\item $\alpha_{2} - \beta_{1} = n\pi$, $\alpha_{1} - \beta_{2} = l\pi+ \frac{\pi}{2}$ 
\item $\alpha_{2} - \beta_{1} = n\pi + \frac{\pi}{2}$, $\alpha_{1} - \beta_{2} = l\pi$ 
\item $\alpha_{2} - \beta_{1} = n\pi+ \frac{\pi}{2}$, $\alpha_{1} - \beta_{2} = l\pi+ \frac{\pi}{2}$, 
\end{enumerate}
where $n$, $l$ are integers. Both cases 2 and 3 can be disqualified as they do not satisfy Eq. (\ref{eq10}) and Eq. (\ref{eq9}), respectively. When condition 1 is assumed, equations (\ref{eq9})-(\ref{eq16}) will be fulfilled provided
\begin{equation}\label{cos4}
    \cos^4\frac{\theta_1}{2}=\sin^4\frac{\theta_1}{2},
\end{equation}
therefore $\theta_{1} = \theta_{2} =\frac{\pi}{2}$. The set of all such solutions is in this case
\begin{align} \label{solB}
\left\{(\alpha_{1}, \beta_{1}, \alpha_{2}, \beta_{2})\in \left\{\frac{\pi}{4}, \frac{3\pi}{4}, \frac{5\pi}{4}, \frac{7\pi}{4}\right\}^4 \colon \alpha_{2} = \beta_{1} + n\pi \wedge \beta_{2} = \alpha_{1} + l\pi \wedge n,l \in \mathbb{Z} \right\},
\end{align}
the set (\ref{solB}) have $4\times4\times2\times2=64$ elements and can be written explicitly as
\begin{align}\label{solBB}
&\left\{\frac{\pi}{4}, \frac{5\pi}{4}\right\} \times\left\{\frac{\pi}{4}, \frac{5\pi}{4}\right\} \times\left\{\frac{\pi}{4}, \frac{5\pi}{4}\right\} \times\left\{\frac{\pi}{4}, \frac{5\pi}{4}\right\}
\cup \left\{\frac{\pi}{4}, \frac{5\pi}{4}\right\} \times\left\{\frac{3\pi}{4}, \frac{7\pi}{4}\right\} \times\left\{\frac{3\pi}{4}, \frac{7\pi}{4}\right\} \times\left\{\frac{\pi}{4}, \frac{5\pi}{4}\right\} \cup \\ \nonumber
&\left\{\frac{3\pi}{4}, \frac{7\pi}{4}\right\} \times\left\{\frac{\pi}{4}, \frac{5\pi}{4}\right\} \times\left\{\frac{\pi}{4}, \frac{5\pi}{4}\right\} \times\left\{\frac{3\pi}{4}, \frac{7\pi}{4}\right\}
\cup \left\{\frac{3\pi}{4}, \frac{7\pi}{4}\right\} \times\left\{\frac{3\pi}{4}, \frac{7\pi}{4}\right\} \times\left\{\frac{3\pi}{4}, \frac{7\pi}{4}\right\} \times\left\{\frac{3\pi}{4}, \frac{7\pi}{4}\right\}.
\end{align}
If condition 4 is met, all the equations (\ref{eq1}-\ref{eq16}) are satisfied for all $\theta \in (0, \pi)$ and the remaining solutions are of the form
\begin{align} \label{solC}
\left\{(\alpha_{1}, \beta_{1}, \alpha_{2}, \beta_{2})\in \left\{\frac{\pi}{4}, \frac{3\pi}{4}, \frac{5\pi}{4}, \frac{7\pi}{4}\right\}^4 \colon \alpha_{2} = \beta_{1} + (n+\frac{1}{2})\pi \wedge \beta_{2} = \alpha_{1} + (l+\frac{1}{2})\pi \wedge n,l \in \mathbb{Z} \right\},
\end{align}
the set (\ref{solC}) have also $4\times4\times2\times2=64$ elements and can be written explicitly as
\begin{align}\label{solCC}
&\left\{\frac{\pi}{4}, \frac{5\pi}{4}\right\} \times\left\{\frac{\pi}{4}, \frac{5\pi}{4}\right\} \times\left\{\frac{3\pi}{4}, \frac{7\pi}{4}\right\} \times\left\{\frac{3\pi}{4}, \frac{7\pi}{4}\right\}
\cup \left\{\frac{3\pi}{4}, \frac{7\pi}{4}\right\} \times\left\{\frac{3\pi}{4}, \frac{7\pi}{4}\right\} \times\left\{\frac{
\pi}{4}, \frac{5\pi}{4}\right\} \times\left\{\frac{\pi}{4}, \frac{5\pi}{4}\right\} \cup \\ \nonumber
&\left\{\frac{\pi}{4}, \frac{5\pi}{4}\right\} \times\left\{\frac{3\pi}{4}, \frac{7\pi}{4}\right\} \times\left\{\frac{\pi}{4}, \frac{5\pi}{4}\right\} \times\left\{\frac{3\pi}{4}, \frac{7\pi}{4}\right\}
\cup \left\{\frac{3\pi}{4}, \frac{7\pi}{4}\right\} \times\left\{\frac{\pi}{4}, \frac{5\pi}{4}\right\} \times\left\{\frac{3\pi}{4}, \frac{7\pi}{4}\right\} \times\left\{\frac{\pi}{4}, \frac{5\pi}{4}\right\}.
\end{align}
In the event, that the parameters pertain to category (\ref{catDE}), $\alpha_2$ and $\beta_2$ calculated through Eqs (\ref{a2b1s}) and (\ref{a1b2s}) again belong to the same Cartesian product (\ref{catDE}). In this case all the equations (\ref{eq1})-(\ref{eq16}) are fully met for all $\theta \in (0, \pi)$ without any additional conditions. For this class of solutions we will introduce an additional distinction due to the relationship between parameters $\alpha_1$ and $\beta_1$, because they will lead to different payoff matrices. The first class is defined by relation $\beta_1 = \alpha_1 + n \pi$, where $n \in \mathbb{Z}$
\begin{align} \label{solD}
\left\{(\alpha_{1}, \beta_{1}, \alpha_{2}, \beta_{2})\in \left\{0, \frac{\pi}{2}, \pi, \frac{3\pi}{2}\right\}^4 \colon \beta_{1} = \alpha_{1} + n\pi \wedge \alpha_{2} = \beta_{1} + l\pi \wedge \beta_{2} = \alpha_{1} + m\pi \wedge n,l,m \in \mathbb{Z} \right\},
\end{align}
the set of all solutions corresponding to the class (\ref{solD}) have $4\times 2\times2 \times2 =32$ elements is equal to
\begin{align}\label{solDD}
\left\{0, \pi\right\}\times \left\{0, \pi\right\} \times \left\{0, \pi\right\}\times \left\{0, \pi\right\} \cup \left\{\frac{\pi}{2}, \frac{3\pi}{2}\right\} \times\left\{\frac{\pi}{2}, \frac{3\pi}{2}\right\} \times\left\{\frac{\pi}{2}, \frac{3\pi}{2}\right\} \times\left\{\frac{\pi}{2}, \frac{3\pi}{2}\right\}.
\end{align}
The second class corresponding to the condition $\beta_1=\alpha_1 + (n+\frac{1}{2}) \pi$, $n \in \mathbb{Z}$, is
\begin{align} \label{solE}
\left\{(\alpha_{1}, \beta_{1}, \alpha_{2}, \beta_{2})\in \left\{0, \frac{\pi}{2}, \pi, \frac{3\pi}{2}\right\}^4 \colon \beta_{1} = \alpha_{1} + (n+\frac{1}{2})\pi \wedge \alpha_{2} = \beta_{1} + l\pi \wedge \beta_{2} = \alpha_{1} + m\pi \wedge n,l,m \in \mathbb{Z} \right\},
\end{align}
it also has $4\times 2\times2 \times2 =32$ elements and the set (\ref{solE}) is equal to
\begin{align}\label{solEE}
\left\{0, \pi \right\}\times \left\{\frac{\pi}{2}, \frac{3\pi}{2}\right\} \times \left\{\frac{\pi}{2}, \frac{3\pi}{2}\right\}\times \left\{0, \pi \right\} \cup \left\{\frac{\pi}{2}, \frac{3\pi}{2}\right\} \times\left\{0, \pi \right\} \times \left\{0, \pi \right\}\times \left\{\frac{\pi}{2}, \frac{3\pi}{2}\right\}.
\end{align}
The second case $\varphi(U_{j}) \in [U_{j}]$ for $j =1,2$ is analogous to one considered in \cite{frackiewicz_permissible_2024}. We obtain equations in the form
\begin{equation}\label{st1}
u(U_{1}(\pi-\theta_{1}, 2\pi - \beta_{1}, \pi - \alpha_{1}), s) = u(U_{1}(\theta_{1}, \alpha_{1}, \beta_{1})), s), ~s\in \{I, iX, U_{1}(\theta_{1}, \alpha_{1}, \beta_{1}), U_{2}(\theta_{2}, \alpha_{2}, \beta_{2})\}
\end{equation}
and
\begin{equation}\label{st2}
u(U_{2}(\pi-\theta_{2}, 2\pi - \beta_{2}, \pi - \alpha_{2}), s) = u(U_{2}(\theta_{2}, \alpha_{2}, \beta_{2})), s), ~s\in \{I, iX, U_{1}(\theta_{1}, \alpha_{1}, \beta_{1}), U_{2}(\theta_{2}, \alpha_{2}, \beta_{2})\}.
\end{equation}
Note that eq.~(\ref{st1}) for $s\in \{I, iX, U_{1}\}$ and eq.~(\ref{st2}) for $s\in \{I, iX, U_{2}\}$ have the same form as the equations (48)-(50) in \cite{frackiewicz_permissible_2024}. As a result solutions $(\theta_{1}, \alpha_{1}, \beta_{1})$ of (\ref{st1}) and $(\theta_{2}, \alpha_{2}, \beta_{2})$ of (\ref{st2}) belong to 3 classes defined by equations (78)-(80) of \cite{frackiewicz_permissible_2024}. In the remaining cases, where $s=U_{2}(\theta_{2}, \alpha_{2}, \beta_{2})$ in eq.~(\ref{st1}) and $s = U_{1}(\theta_{1}, \alpha_{1}, \beta_{1})$ in eq.~(\ref{st2}) the solutions are particular cases of solutions from the first case i.e., $\varphi(U_{1}) \in [U_{2}]$ and $\varphi(U_{2}) \in [U_{1}]$.
\section{Permissible extensions of classical $2\times 2$ games combining four strategies}
In the previous section, five types of solutions obeying the criteria of invariance with respect to isomorphic transformation of the game defined by the bimatrix (\ref{bimatrixiso}) were found. Here we will define permissible game extensions, defined by their payoff matrices, corresponding to these solutions.  Notwithstanding, prior to providing particular matrices, we establish a lemma that ensures the invariance of the expansion matrices showcased in the subsequent sections with regards to isomorphic transformations of the initial game. It is important to note that
\begin{lemma}\label{lemmaH}
  Any matrix $M$ presented in the form of linear matrix combinations of $\Gamma^0, \Gamma^1, \Gamma^2, \Gamma^3$,
\begingroup 
\setlength\arraycolsep{4.5pt}
\renewcommand\arraystretch{1.5}
\begin{equation}\label{lemma2}
M = \begin{pmatrix}
\sum_{i=0}^3 e_i\Gamma^i & \sum_{i=0}^3 f_i\Gamma^i \\ 
\sum_{i=0}^3 g_i\Gamma^i & \sum_{i=0}^3 h_i\Gamma^i
\end{pmatrix}, \quad \text{where} \quad e_i, f_i, g_i, h_i\in \mathrm{R},
\end{equation}
\endgroup
remains invariant with respect to any isomorphic transformation of the initial matrix game $\Gamma=\Gamma^0$.
\end{lemma}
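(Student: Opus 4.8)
The plan is to read ``$M$ remains invariant under an isomorphic transformation of $\Gamma=\Gamma^0$'' as the assertion that, for each of the three nontrivial relabellings of the rows and/or columns of the classical bimatrix, the $4\times 4$ bimatrix produced by rerunning the construction (\ref{lemma2}) — with the \emph{same} coefficients $e_i,f_i,g_i,h_i$ — over the transformed classical game is isomorphic, as a strategic-form game, to $M$. So it suffices to exhibit, for each such transformation, an explicit permutation of the four rows and of the four columns of $M$ that realizes the isomorphism.

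The starting point is a structural remark. Identify the index set $\{0,1,2,3\}$ with the group $\mathbb{Z}_2\times\mathbb{Z}_2$ via $0\leftrightarrow(0,0)$, $1\leftrightarrow(1,0)$, $2\leftrightarrow(0,1)$, $3\leftrightarrow(1,1)$, matching (\ref{gamma1})--(\ref{gamma3}), and record the classical payoff arrays as $\Gamma^{k}[i][j]=\Delta_{i\oplus k_1,\,j\oplus k_2}$ for $k=(k_1,k_2)$ and $i,j\in\{0,1\}$. Applying the transformation indexed by $w=(w_1,w_2)$ to an array means replacing its row index $i$ by $i\oplus w_1$ and its column index $j$ by $j\oplus w_2$; a one-line check then shows that this sends $\Gamma^k$ to $\Gamma^{k\oplus w}$. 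Thus $\{\Gamma^0,\dots,\Gamma^3\}$ carries a simply transitive $\mathbb{Z}_2\times\mathbb{Z}_2$-action, and the $k$-th isomorphic counterpart of the transformed game $\Gamma^w$ is $\Gamma^{k\oplus w}$. Consequently, rerunning (\ref{lemma2}) over $\Gamma^w$ replaces each block $\sum_{k} x_k\Gamma^k$ (with $x$ standing for any of the quadruples $(e_k),(f_k),(g_k),(h_k)$) by $\sum_k x_k\Gamma^{k\oplus w}=\sum_k x_{k\oplus w}\Gamma^k$: the sole effect on the matrix is the reindexing $x_k\mapsto x_{k\oplus w}$ inside every block.

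It then remains to show that a suitable relabelling of the rows and columns of $M$ undoes exactly this reindexing. I would index the four rows of $M$ by $(r_1,r_2)\in\mathbb{Z}_2\times\mathbb{Z}_2$, with $r_1$ choosing the block-row and $r_2$ the row inside the $2\times2$ block, and likewise the columns by $(c_1,c_2)$, so that $M[(r_1,r_2)][(c_1,c_2)]=\sum_k x^{(r_1,c_1)}_k\,\Gamma^k[r_2][c_2]$ with $x^{(0,0)},x^{(0,1)},x^{(1,0)},x^{(1,1)}$ equal to $(e_k),(f_k),(g_k),(h_k)$. For the transformation $w$, take the row permutation $P_w(r_1,r_2)=(r_1,r_2\oplus w_1)$ and the column permutation $Q_w(c_1,c_2)=(c_1,c_2\oplus w_2)$ — each a genuine permutation of four elements that merely transposes the two rows (resp. columns) inside every block when $w_1$ (resp. $w_2$) equals $1$, never mixing the blocks. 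Using $\Gamma^k[r_2\oplus w_1][c_2\oplus w_2]=\Delta_{(r_2\oplus w_1)\oplus k_1,\,(c_2\oplus w_2)\oplus k_2}=\Gamma^{k\oplus w}[r_2][c_2]$, one obtains $M[P_w(r_1,r_2)][Q_w(c_1,c_2)]=\sum_k x^{(r_1,c_1)}_k\Gamma^{k\oplus w}[r_2][c_2]=\sum_k x^{(r_1,c_1)}_{k\oplus w}\Gamma^k[r_2][c_2]$, which is precisely the $((r_1,r_2),(c_1,c_2))$ entry of the matrix obtained by rerunning (\ref{lemma2}) over $\Gamma^w$. Hence the two $4\times4$ games coincide after relabelling the first player's strategies by $P_w$ and the second player's by $Q_w$, i.e. they are isomorphic; running over the three nontrivial $w$ gives the lemma.

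I expect the only delicate point to be the bookkeeping in the structural remark: keeping the physical row/column order of a game distinct from its strategy labels, and checking that forming the isomorphic counterparts of an already-transformed game composes as translation in $\mathbb{Z}_2\times\mathbb{Z}_2$ — this is the $4\times4$, blockwise analogue of Proposition~\ref{propstarapraca}. Once that is settled the argument needs nothing beyond the single XOR identity above. Equivalently, one may drop the group language and simply tabulate, for each of the three transformations separately, the induced permutation of $\{\Gamma^0,\Gamma^1,\Gamma^2,\Gamma^3\}$ together with the compensating within-block transposition of rows or of columns, which is the same short verification carried out three times.
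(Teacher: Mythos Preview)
Your proof is correct and follows essentially the same approach as the paper's: both identify that an isomorphic transformation of $\Gamma^0$ permutes the family $\{\Gamma^0,\Gamma^1,\Gamma^2,\Gamma^3\}$, and then exhibit a within-block row/column transposition of $M$ realizing the isomorphism. Your $\mathbb{Z}_2\times\mathbb{Z}_2$ packaging handles all three cases uniformly, whereas the paper treats the row-swap case explicitly and declares the other two analogous; as you yourself note in the last sentence, dropping the group language recovers precisely the paper's argument.
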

\begin{proof}
Replacing rows in the initial matrix $\Gamma^0$, i.e. assuming that $\Delta'_{00}=\Delta_{10}$, $\Delta'_{01}=\Delta_{11}$, $\Delta'_{10}=\Delta_{00}$ and $\Delta'_{11}=\Delta_{01}$, leads to substitutions $(\Gamma^0)'=\Gamma^1$, $(\Gamma^{1})'=\Gamma^0$, $(\Gamma^{2})'=\Gamma^3$ i $(\Gamma^{3})'=\Gamma^2$, and consequently  
\begingroup 
\setlength\arraycolsep{4.5pt}
\renewcommand\arraystretch{1.5}
\begin{equation}\label{matrixL3}
\medmuskip = 0.2mu
M' = \begin{pmatrix}
\sum_{i=0}^3 e_i(\Gamma^i)' & \sum_{i=0}^3 f_i(\Gamma^i)' \\ 
\sum_{i=0}^3 g_i(\Gamma^i)' & \sum_{i=0}^3 h_i(\Gamma^i)'
\end{pmatrix} =
 \begin{pmatrix}
e_0\Gamma^1+e_1\Gamma^0+e_2\Gamma^3+e_3\Gamma^2 & f_0\Gamma^1+f_1\Gamma^0+f_2\Gamma^3+f_3\Gamma^2 \\ 
g_0\Gamma^1+g_1\Gamma^0+g_2\Gamma^3+g_3\Gamma^2 & h_0\Gamma^1+h_1\Gamma^0+h_2\Gamma^3+h_3\Gamma^2
\end{pmatrix}.
\end{equation}
\endgroup
Matrix $M'$ is isomorphic with $M$, and the transformation establishing this isomorphism is the replacement of rows 1 with 2 and 3 with 4 in the matrix on the right side of the equation (\ref{matrixL3}). For the remaining isomorphic transformations of the $\Gamma^0$ matrix, the proof is analogous, with the replacement of columns leading to substitutions $(\Gamma^0)'=\Gamma^2$, $(\Gamma^{1})'=\Gamma^3$, $(\Gamma^{2})'=\Gamma^0$ i $(\Gamma^{3})'=\Gamma^1$ and replacement of rows and columns leading to substitutions $(\Gamma^0)'=\Gamma^3$, $(\Gamma^{1})'=\Gamma^2$, $(\Gamma^{2})'=\Gamma^1$ i $(\Gamma^{3})'=\Gamma^0$.
\end{proof}
It should be pointed out that Lemma \ref{lemmaH} holds true for all extensions where the number of players' strategies is even and the game matrix follows a structure similar to (\ref{lemma2}). The demonstrated lemma guarantees that all five forms (A-E) of the classical game extensions outlined below are inherently unaffected by isomorphic transformations of the original game, as long as they are presented in the form (\ref{lemma2}).
\subsection{Extension of the A class}
The first type's extension corresponds to $\{\theta_1, \theta_2\} = \{0,\pi\}$ and remaining parameters satisfying Eq. (\ref{sin2b}). For this situation, it is necessary for the parameters $\alpha_1,\beta_1, \alpha_2,\beta_2 \in [0,2\pi)$ to fulfill one of the conditions
\begin{align}
&\alpha_1+\beta_2=n \pi \hspace{10pt}\text{if} \hspace{10pt} \theta_1=0, \,\theta_2=\pi \hspace{10pt}\text{or} \label{t10}\\
&\alpha_2+\beta_1=n \pi \hspace{10pt}\text{if} \hspace{10pt} \theta_1=\pi, \, \theta_2=0, \label{t1pi}
\end{align}
where $n \in \mathbb{Z}$. Matrices of the extended game, corresponding to (\ref{t10}) and (\ref{t1pi}) are:
\begingroup 
\setlength\arraycolsep{4.5pt}
\renewcommand\arraystretch{1.5}
\begin{equation} \label{klasaA}
\medmuskip = 0.2mu
A_1 = \begin{pmatrix}
\Gamma & a_1\Gamma+a_1'\Gamma^3 \\ 
a_1\Gamma+a_1'\Gamma^3 & b_1\Gamma+b_1'\Gamma^3
\end{pmatrix}, \quad 
A_2 = \begin{pmatrix}
\Gamma & a_2\Gamma^2+a_2'\Gamma^1 \\ 
a_2\Gamma^1+a_2'\Gamma^2 & b_2\Gamma^3+b_2'\Gamma
\end{pmatrix},
\end{equation}
\endgroup
respectively, where $a_i=\cos^{2}\alpha_i$, $a_i'= 1-a_i =\sin^{2}\alpha_i$ and $b_i=\cos^{2}2\alpha_i$, $b_i'= 1-b_i =\sin^{2}2\alpha_i$, $i=1,2$. Significant scenarios for this type of extension include for $A_1$: 
\begingroup 
\setlength\arraycolsep{4.5pt}
\renewcommand\arraystretch{1.5}
\begin{equation} \label{klasaA1p}
\medmuskip = 0.2mu
\left. A_1\right|_{\alpha_{1} =0} = \begin{pmatrix}
\Gamma & \Gamma \\ 
\Gamma & \Gamma
\end{pmatrix}, \quad 
\left. A_1\right|_{\alpha_{1} = \frac{\pi}{2}} = \begin{pmatrix}
\Gamma & \Gamma^3 \\ 
\Gamma^3 & \Gamma
\end{pmatrix},
\end{equation}
\endgroup
and for $A_2$
\begingroup 
\setlength\arraycolsep{4.5pt}
\renewcommand\arraystretch{1.5}
\begin{equation} \label{klasaA2p}
\medmuskip = 0.2mu
\left. A_2\right|_{\alpha_{2} =0} =  \begin{pmatrix}
\Gamma & \Gamma^2 \\ 
\Gamma^1 & \Gamma^3
\end{pmatrix}, \quad 
\left. A_2\right|_{\alpha_{2} = \frac{\pi}{2}} = \begin{pmatrix}
\Gamma & \Gamma^1 \\ 
\Gamma^2 & \Gamma^3
\end{pmatrix}.
\end{equation}
\endgroup
 The explicit form of $A_1$ for $\alpha_1 = \beta_2 =\frac{\pi}{2}$ is 
\begin{equation}\label{Pauli}
\bordermatrix{ & I & iX & U_{1} & U_{2} \cr 
I & \Delta_{00} & \Delta_{01} & \Delta_{11} & \Delta_{10}\cr 
iX & \Delta_{10} & \Delta_{11} & \Delta_{01} & \Delta_{00} \cr 
U_{1} & \Delta_{11} & \Delta_{10} & \Delta_{00} & \Delta_{01} \cr 
U_{2} & \Delta_{01} & \Delta_{00} & \Delta_{10} & \Delta_{11}}.
\end{equation}
Some instances of this particular type of expansion has been used previously e.g. in \cite{giannakis_quantum_2019,consuelo-leal_pareto-optimal_2019,szopa_efficiency_2021}.
Note, that the strategy matrices $U(\theta, \alpha, \beta)$ defined by (\ref{Umatrix}) are diagonal or anti-diagonal for $\theta_1 = 0$ or $\theta_1 = \pi$, respectively. In the special case of (\ref{Pauli}) the operators generating the extension are Pauli matrices:
\begin{equation}
\setlength\arraycolsep{3.5pt}
\renewcommand\arraystretch{1.0}
iX=i\sigma_x = \begin{pmatrix}
0 & i \\ 
i & 0
\end{pmatrix}, \quad U_1=i\sigma_z = \begin{pmatrix}
i & 0 \\ 
0 & -i
\end{pmatrix}, \quad U_2=i\sigma_y = \begin{pmatrix}
0 & -1 \\ 
1 & 0
\end{pmatrix}.
\end{equation}
In the remaining two cases $\left. A_2\right|_{\alpha_{2} =0}$ and $\left. A_2\right|_{\alpha_{2} =\frac{\pi}{2}}$ the explicit forms of the extensions (\ref{klasaA2p}) are
\begingroup 
\setlength\arraycolsep{4.5pt}
\renewcommand\arraystretch{1.5}
\begin{equation}\label{Pauli2}
\bordermatrix{ & I & iX & U_{1} & U_{2} \cr 
I & \Delta_{00} & \Delta_{01} & \Delta_{01} & \Delta_{00}\cr 
iX & \Delta_{10} & \Delta_{11} & \Delta_{11} & \Delta_{10} \cr 
U_{1} & \Delta_{10} & \Delta_{11} & \Delta_{11} & \Delta_{10} \cr 
U_{2} & \Delta_{00} & \Delta_{01} & \Delta_{01} & \Delta_{00}}, \quad \quad \quad 
\bordermatrix{ & I & iX & U_{1} & U_{2} \cr 
I & \Delta_{00} & \Delta_{01} & \Delta_{10} & \Delta_{11}\cr 
iX & \Delta_{10} & \Delta_{11} & \Delta_{00} & \Delta_{01} \cr 
U_{1} & \Delta_{01} & \Delta_{00} & \Delta_{11} & \Delta_{10} \cr 
U_{2} & \Delta_{11} & \Delta_{10} & \Delta_{01} & \Delta_{00}},
\end{equation}
\endgroup
where $U_1=i\sigma_x$, $U_2=I$ and $U_1=i\sigma_y$, $U_2=i\sigma_z$, respectively. 
\subsection{Extension of the B class}
 This particular extension satisfies Eq. (\ref{cos4}) and therefore $\theta_1=\theta_2=\frac{\pi}{2}$. The phase parameters $\alpha$ and $\beta$ are multiples of $\frac{\pi}{4}$ with the restrictions that $\alpha_2-\beta_1=n\pi$ and $\beta_2-\alpha_1=l\pi$, with integer $n,l$. For all these parameters (\ref{solBB}) the payoff matrix of the game is 
 \begingroup 
\setlength\arraycolsep{4.5pt}
\renewcommand\arraystretch{1.5}
\begin{equation} \label{klasaB}
B = \begin{pmatrix}
\medmuskip = 0.2mu
\Gamma & \frac{\Gamma+\Gamma^1+\Gamma^2+\Gamma^3}{4} \\ 
\frac{\Gamma+\Gamma^1+\Gamma^2+\Gamma^3}{4} & \frac{\Gamma+\Gamma^1+\Gamma^2+\Gamma^3}{4}
\end{pmatrix}, 
\end{equation}
\endgroup
or explicitly
\begin{equation}\label{klasaB1}
\bordermatrix{ & I & iX & U_{1} & U_{2} \cr 
I & \Delta_{00} & \Delta_{01} & \Delta & \Delta \cr 
iX & \Delta_{10} & \Delta_{11} & \Delta & \Delta \cr 
U_{1} & \Delta & \Delta & \Delta & \Delta \cr 
U_{2} & \Delta & \Delta & \Delta & \Delta},  \quad \text{where} \quad \Delta = \frac{\Delta_{00}+\Delta_{01}+\Delta_{10}+\Delta_{11}}{4}.
\end{equation}
An example of operators that result in a type B extension are the player strategies listed below
\begingroup 
\setlength\arraycolsep{4.5pt}
\renewcommand\arraystretch{1.5}
\begin{equation}
U_{1} = U\left(\frac{\pi}{2}, \frac{\pi}{4}, \frac{3\pi}{4}\right) = \frac{1}{2}\begin{pmatrix}
-1+i & -1-i \\ 
1-i & -1-i
\end{pmatrix}, \quad U_{2} =U\left(\frac{\pi}{2}, \frac{3\pi}{4}, \frac{\pi}{4}\right) = \frac{1}{2}\begin{pmatrix}
-1+i &-1+i \\ 
1+i & -1-i
\end{pmatrix}.
\end{equation}
\endgroup
\subsection{Extension of the C class}
 In the case of this extension, both $\theta_1$ and $\theta_2$ can vary continuously within the range of (0, $\pi$), with the condition that $\theta_2$ equals $\pi - \theta_1$. The phase parameters $\alpha$ and $\beta$ are multiples of $\frac{\pi}{4}$ with the restrictions that $\alpha_2-\beta_1=n\pi+\frac{\pi}{2}$ and $\beta_2-\alpha_1=l\pi+\frac{\pi}{2}$,  with integer $n,l$. The set of all these parameters is (\ref{solCC}). The corresponding extended game matrix is
 \begingroup 
\setlength\arraycolsep{4.5pt}
\renewcommand\arraystretch{1.5}
\begin{equation}
\medmuskip = 0.2mu
C = \begin{pmatrix} \label{klasaC}
\Gamma & t\frac{\Gamma+\Gamma^3}{2}+t'\frac{\Gamma^1+\Gamma^2}{2} \\ 
t\frac{\Gamma+\Gamma^3}{2}+t'\frac{\Gamma^1+\Gamma^2}{2} & t'^2\Gamma + tt'(\Gamma^1+\Gamma^2) + t^2\Gamma^3
\end{pmatrix}, 
\end{equation}
\endgroup
where $t=\cos^{2}\frac{\theta_1}{2}$, $t'= 1-t =\sin^{2}\frac{\theta_1}{2}$. In the special case when $\theta_1=\theta_2=\frac{\pi}{2}$, the C class extension reduces to a form (\ref{klasaB}) of class B, which then applies to both types of parameters (\ref{solBB}) and (\ref{solCC}).

To illustrate that adding two unitary operations can significantly impact the course and consequently the final outcome of the game, we use a well-known problem in game theory called the Prisoner's Dilemma (PD). The typical matrix representation of this game can be expressed as a bimatrix
\begin{equation}
\begin{pmatrix}
(3,3) & (0,5) \\ 
(5,0) & (1,1)
\end{pmatrix}.
\end{equation}
Following (\ref{klasaC}), the Prisoner's dilemma in the class C extension is of the form
\begin{equation}
\renewcommand\arraystretch{1.5}
\medmuskip = 0.6mu
C = \begin{pmatrix} 
(3,3) & 
(0,5) &
\left(\frac{5-t}{2}, \frac{5-t}{2}\right) &
\left(\frac{t+4}{2}, \frac{t+4}{2}\right) \\
(5,0) &
(1,1) & 
\left(\frac{t+4}{2}, \frac{t+4}{2}\right) &
\left(\frac{5-t}{2}, \frac{5-t}{2}\right) \\
\left(\frac{5-t}{2}, \frac{5-t}{2}\right) &
\left(\frac{t+4}{2}, \frac{t+4}{2}\right) &
\left(-t^2-t+3, -t^2-t+3\right) &
\left(t (t+4),t^2-6 t+5\right) \\
\left(\frac{t+4}{2}, \frac{t+4}{2}\right) &
\left(\frac{5-t}{2}, \frac{5-t}{2}\right)  &
\left(t^2-6 t+5,t (t+4)\right) &
\left(-t^2+3 t+1, -t^2+3 t+1\right)
\end{pmatrix}.
\end{equation}
In particular, let us determine a specific form of the game for $\theta_1=\pi/3$, or equivalently, for $t=3/4$. We get
\begin{equation}
\renewcommand\arraystretch{1.5}
C=\bordermatrix{ & I & iX & U_{1} & U_{2} \cr 
I & (3,3) & (0,5) & \left(\frac{17}{8},\frac{17}{8}\right) &  \left(\frac{19}{8},\frac{19}{8}\right) \cr
iX &  (5,0) & (1,1) &  \left(\frac{19}{8},\frac{19}{8}\right) & \left(\frac{17}{8},\frac{17}{8}\right)\cr
U_{1} & \left(\frac{17}{8},\frac{17}{8}\right) &  \left(\frac{19}{8},\frac{19}{8}\right) & \left(\frac{27}{16},\frac{27}{16}\right) &  \left(\frac{57}{16},\frac{17}{16}\right) \cr
U_{2} & \left(\frac{19}{8},\frac{19}{8}\right) &  \left(\frac{17}{8},\frac{17}{8}\right) & \left(\frac{17}{16},\frac{57}{16}\right) &  \left(\frac{43}{16},\frac{43}{16}\right)    
}.
\end{equation}
The above extension of the Prisoner's Dilemma game significantly enriches the game and changes its Nash equilibria. While the original PD has only a single equilibrium with a payoff 1 for both players, corresponding to the strategy profile $(iX,iX)$, this extended version offers multiple equilibrium points. Two equilibria with a payoff $\frac{19}{8}$ corresponding to the profiles $(iX,U_1)$ and $(U_1,iX)$ and one mixed strategy equilibrium with a payoff $\frac{23}{12}$, corresponding to the strategy $(0,\frac{1}{3}iX,\frac{2}{3}U_1,0)$ for both players.

\subsection{Extension of the D class}
Here again $\theta_1, \theta_2 \in (0,\pi)$ with the condition that $\theta_2=\pi - \theta_1$. 
The parameters $\alpha_1$ and $\beta_1$ are mutually dependent, namely $\beta_1 = \alpha_1 + n \pi$, where $n \in \mathbb{Z}$. The set of all such solutions is given by Eq. (\ref{solDD}). There are two extension matrices in this class
\begingroup 
\setlength\arraycolsep{4.5pt}
\renewcommand\arraystretch{1.5}
\begin{equation}\label{klasaD}
\medmuskip = 0.2mu
D_1 = \begin{pmatrix}
\Gamma & t\Gamma+t'\Gamma^2 \\ 
t\Gamma+t'\Gamma^1 & t^2\Gamma + tt'(\Gamma^1+\Gamma^2) + t'^2\Gamma^3
\end{pmatrix}, \quad 
D_2 = \begin{pmatrix}
\Gamma & t\Gamma^3+t'\Gamma^1 \\ 
t\Gamma^3+t'\Gamma^2 & t^2\Gamma + tt'(\Gamma^1+\Gamma^2) + t'^2\Gamma^3
\end{pmatrix}
\end{equation}
\endgroup
where $t=\cos^{2}\frac{\theta_1}{2}$, $t'= 1-t =\sin^{2}\frac{\theta_1}{2}$, and $\alpha_i, \beta_j \in \{0, \pi\}$ for $D_1$ or $\alpha_i, \beta_j \in \{\frac{\pi}{2}, \frac{3\pi}{2}\}$  for $D_2$. The extensions $D$ converge to the extensions $A$ as $\theta_1$ approaches $0$ or $\pi$: 
 \begin{equation}
\lim_{\theta_{1} \to 0} D_{1} = \left. A_1\right|_{\alpha_{1} =0}, \quad \lim_{\theta_{1}\to 0}D_{2} = \left. A_1\right|_{\alpha_{1} =\frac{\pi}{2}}
 \end{equation} and
 \begin{equation}
\lim_{\theta_{1} \to \pi} D_{1} = \left. A_2\right|_{\alpha_{2} =0}, \quad \lim_{\theta_{1}\to \pi}D_{2} = \left. A_2\right|_{\alpha_{2} =\frac{\pi}{2}}.
 \end{equation} 
\subsection{Extension of the E class}
For this type $\theta_1, \theta_2 \in (0,\pi)$, where again $\theta_2=\pi - \theta_1$. 
The parameters $\alpha_1$ and $\beta_1$ obey: $\beta_1 = \alpha_1 + (n + \frac{1}{2}) \pi$, where $n \in \mathbb{Z}$. The set of all such parameters is given by Eq. (\ref{solEE}). The extended game matrices are 
\begingroup 
\setlength\arraycolsep{4.5pt}
\renewcommand\arraystretch{1.5}
\begin{equation}
\medmuskip = 0.2mu
E_1 = \begin{pmatrix}\label{klasaE}
\Gamma & t\Gamma+t'\Gamma^1 \\ 
t\Gamma+t'\Gamma^2 & t^2\Gamma + tt'(\Gamma^1+\Gamma^2) + t'^2\Gamma^3
\end{pmatrix}, \quad
E_2 = \begin{pmatrix}
\Gamma & t\Gamma^3+t'\Gamma^2 \\ 
t\Gamma^3+t'\Gamma^1 & t^2\Gamma + tt'(\Gamma^1+\Gamma^2) + t'^2\Gamma^3
\end{pmatrix},
\end{equation}
\endgroup
where $t=\cos^{2}\frac{\theta_1}{2}$, $t'= 1-t =\sin^{2}\frac{\theta_1}{2}$ and moreover $\alpha_1, \beta_2 \in \{0, \pi\}$, $\alpha_2, \beta_1 \in \{\frac{\pi}{2}, \frac{3\pi}{2}\}$ for $E_1$, whereas $\alpha_1, \beta_2 \in \{\frac{\pi}{2}, \frac{3\pi}{2}\}$, $\alpha_2, \beta_1 \in \{0, \pi\}$ for $E_2$. Here again the extensions $E$ converge to the extensions $A$ as $\theta_1$ approaches $0$ or $\pi$:
 \begin{equation}
\lim_{\theta_{1} \to 0} E_{1} = \left. A_1\right|_{\alpha_{1} =0}, \quad \lim_{\theta_{1}\to 0}E_{2} = \left. A_1\right|_{\alpha_{1} =\frac{\pi}{2}}
 \end{equation} and
 \begin{equation}
\lim_{\theta_{1} \to \pi} E_{1} = \left. A_2\right|_{\alpha_{2} =\frac{\pi}{2}}, \quad \lim_{\theta_{1}\to \pi}E_{2} = \left. A_2\right|_{\alpha_{2} = 0}.
 \end{equation} 
The convergence of extensions D and E to extension A at $\theta_{1} \to 0$ and $\theta_{1} \to \pi$, is depicted in Figure 1 while Table 1 presents the ranges of all parameters of allowable extensions of the classic game (\ref{bimatrixiso}).
    \begin{figure}[t]
    \centering 
\includegraphics[width=4in]{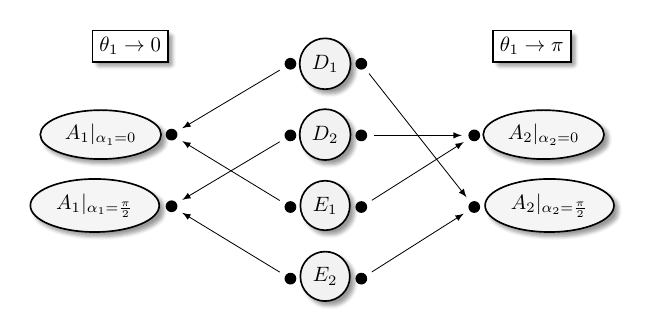}
\caption{The convergence of extensions D and E to extension A at $\theta_{1} \to 0$ and $\theta_{1} \to \pi$. \label{fig1}}
\end{figure}

\NewDocumentCommand{\INTERVALINNARDS}{ m m }{
    #1 {,} #2
}
\NewDocumentCommand{\interval}{ s m >{\SplitArgument{1}{,}}m m o }{
    \IfBooleanTF{#1}{
        \left#2 \INTERVALINNARDS #3 \right#4
    }{
        \IfValueTF{#5}{
            #5{#2} \INTERVALINNARDS #3 #5{#4}
        }{
            #2 \INTERVALINNARDS #3 #4
        }
    }
}
\begingroup 
\begin{table}
\renewcommand*{\arraystretch}{1.7}%
\begin{tabular}{ | M{0.7cm} | M{0.9cm}| M{0.9cm} | M{1.8cm} | M{1.8cm} |M{1.8cm} |M{1.8cm} |M{1 cm} |M{1.35cm}|} 

  \hline
   Class & $\theta_1$ & $\theta_2$ & $\alpha_1$ & $\beta_1$ &$\alpha_2$ &  $\beta_2$ & Eq. for solutions & number of discrete solutions \\ 
  \hline
  \hline
  A & {$\!\begin{aligned}
 0 \\
 \pi
\end{aligned} $}& {$\!\begin{aligned}
 \pi \\
 0
\end{aligned}$} & {$\!\begin{aligned}
 \interval[{0,2\pi}) \\
 \interval[{0,2\pi})
\end{aligned}$} & {$\!\begin{aligned}
 \interval[{0,2\pi}) \\
 \interval[{0,2\pi})
\end{aligned}$}& {$\!\begin{aligned}
 \interval[{0,2\pi}) \\
 n\pi-\beta_1
\end{aligned}$}& {$\!\begin{aligned}
 n\pi-\alpha_1 \\
\interval[{0,2\pi})
\end{aligned}$} &  {$\!\begin{aligned}
 (\ref{t10}) \\
 (\ref{t1pi})
\end{aligned}$} & {$\!\begin{aligned}
 1 \\
 1
\end{aligned}$} \\
  \hline
  B & $\frac{\pi}{2}$  & $\frac{\pi}{2}$ & $\frac{\pi}{4},\frac{3\pi}{4},\frac{5\pi}{4}, \frac{7\pi}{4}$& $\frac{\pi}{4},\frac{3\pi}{4},\frac{5\pi}{4}, \frac{7\pi}{4}$& $\beta_1 + n \pi$  &  $\alpha_1 + l \pi $ & (\ref{solBB}) & 64 \\ 
  \hline
  C & $(0,\pi)$  & $\pi-\theta_1$ & $\frac{\pi}{4},\frac{3\pi}{4},\frac{5\pi}{4}, \frac{7\pi}{4}$ & $\frac{\pi}{4},\frac{3\pi}{4},\frac{5\pi}{4}, \frac{7\pi}{4}$  & $\beta_1 + (n +\frac{1}{2})\pi$ & $\alpha_1 + (l +\frac{1}{2})\pi$ & (\ref{solCC}) & 64\\ 
  \hline
  D & $(0,\pi)$  & $\pi-\theta_1$   &0, $\frac{\pi}{2},\pi,\frac{3\pi}{2}$ & $\alpha_1 + n \pi $ & $\beta_1 + l \pi $&  $\alpha_1 + m \pi $&  (\ref{solDD}) & 32 \\ 
  \hline
  E & $(0,\pi)$  & $\pi-\theta_1$   & 0, $\frac{\pi}{2},\pi,\frac{3\pi}{2}$ & $\alpha_1 + (n +\frac{1}{2})\pi$ & $\beta_1 + l \pi $  &  $\alpha_1 + m \pi $ & (\ref{solEE}) & 32 \\ 
  \hline
\end{tabular}
\caption{Parameters of permissible extensions A - E of the game defined by the bimatrix (\ref{bimatrixiso}), where $n, l, m \in \mathbb{Z}$ and $\alpha_i, \beta_j \in [0,2\pi)$.} 
\end{table}
\endgroup

\section{Conclusions}
The aim of the work was to determine all possible pairs of operators that, together with classical strategies, create a $4\times 4$ quantum game invariant under isomorphic transformations of the classical game. Our research showed that the problem of finding two unitary strategies was much more complex that the problem regarding a single unitary strategy considered in \cite{frackiewicz_permissible_2024}. We proved a theorem giving a practical criterion for the invariance of the quantum extension with respect to isomorphic transformations of the classical game. As a result, we have determined five classes of games and all unitary operations corresponding to these classes. Each game class returns the same game theory problem for a given input classical game and its isomorphic counterparts. We have identified the interdependencies between different classes of extensions, including situations where one class evolves into another.

The exploration of quantum game theory enriches our understanding of strategic behavior in complex systems.  By providing a framework for analyzing and predicting the outcomes of interactions among rational agents operating under quantum rules, this field paves the way for the development of new strategies for cooperation, competition, and conflict resolution in a world of quantum computers. Moreover, the application of quantum game theory to decision-making in strategic interactions reveals novel insights into the optimization of quantum algorithms, potentially revolutionizing computational methods and technologies. In summary, the study of quantum game theory, particularly through the EWL approach and the analysis of mixed strategies involving multiple simple quantum strategies, represents a crucial step forward in the integration of quantum computation with strategic decision-making. Its exploration not only broadens the theoretical horizons of game theory but also offers tangible benefits for the advancement of quantum information processing, with implications for economics, cybersecurity, and beyond.

\section*{Acknowledgements}
Computations were carried out using the computers of Centre of Informatics Tricity Academic Supercomputer \& Network.
\bibliographystyle{qipstyle}
\bibliography{references-2}

\begin{thebibliography}{10}
\providecommand{\url}[1]{\texttt{#1}}
\providecommand{\urlprefix}{URL }
\providecommand{\doi}[1]{https://doi.org/#1}

\bibitem{eisert_quantum_1999}
Eisert, J., Wilkens, M., Lewenstein, M.: Quantum {Games} and {Quantum}
  {Strategies}. Physical Review Letters  \textbf{83}(15),  3077--3080 (Oct
  1999). \doi{10.1103/PhysRevLett.83.3077},
  \url{https://link.aps.org/doi/10.1103/PhysRevLett.83.3077}, publisher:
  American Physical Society

\bibitem{benjamin_comment_2001}
Benjamin, S.C., Hayden, P.M.: Comment on ``{Quantum} {Games} and {Quantum}
  {Strategies}''. Physical Review Letters  \textbf{87}(6),  069801 (Jul 2001).
  \doi{10.1103/PhysRevLett.87.069801},
  \url{https://link.aps.org/doi/10.1103/PhysRevLett.87.069801}, publisher:
  American Physical Society

\bibitem{frackiewicz_strong_2016}
Frąckiewicz, P.: Strong {Isomorphism} in {Eisert}-{Wilkens}-{Lewenstein}
  {Type} {Quantum} {Games} (Aug 2016).
  \doi{https://doi.org/10.1155/2016/4180864},
  \url{https://www.hindawi.com/journals/amp/2016/4180864/}, iSSN: 1687-9120
  Pages: e4180864 Publisher: Hindawi Volume: 2016

\bibitem{frackiewicz_quantum_2016}
Frąckiewicz, P., Sładkowski, J.: Quantum approach to {Bertrand} duopoly.
  Quantum Information Processing  \textbf{15}(9),  3637--3650 (Sep 2016).
  \doi{10.1007/s11128-016-1355-3},
  \url{https://doi.org/10.1007/s11128-016-1355-3}

\bibitem{frackiewicz_permissible_2024}
Frąckiewicz, P., Szopa, M.: Permissible extensions of classical to quantum
  games combining three strategies. Quantum Information Processing
  \textbf{23}(3), ~75 (Feb 2024). \doi{10.1007/s11128-024-04283-3},
  \url{https://doi.org/10.1007/s11128-024-04283-3}

\bibitem{khan_quantum_2018}
Khan, F.S., Solmeyer, N., Balu, R., Humble, T.S.: Quantum games: a review of
  the history, current state, and interpretation. Quantum Information
  Processing  \textbf{17}(11), ~309 (Oct 2018).
  \doi{10.1007/s11128-018-2082-8},
  \url{https://doi.org/10.1007/s11128-018-2082-8}

\bibitem{naskar_quantum_2021}
Naskar, K.: Quantum version of {Prisoners}’ {Dilemma} under interacting
  environment. Quantum Information Processing  \textbf{20}(11), ~365 (Nov
  2021). \doi{10.1007/s11128-021-03310-x},
  \url{https://doi.org/10.1007/s11128-021-03310-x}

\bibitem{maschler_game_2020}
Maschler, M., Solan, E., Zamir, S.: Game theory. Cambridge University Press,
  Cambridge, UK, (2020), oCLC: 1180191769

\bibitem{gabarro_complexity_2007}
Gabarró, J., García, A., Serna, M.: On the {Complexity} of {Game}
  {Isomorphism}. In: Kučera, L., Kučera, A. (eds.) Mathematical {Foundations}
  of {Computer} {Science} 2007. pp. 559--571. Lecture {Notes} in {Computer}
  {Science}, Springer, Berlin, Heidelberg (2007).
  \doi{10.1007/978-3-540-74456-650}

\bibitem{nash_non-cooperative_1951}
Nash, J.: Non-{Cooperative} {Games}. Annals of Mathematics  \textbf{54}(2),
  286--295 (1951). \doi{10.2307/1969529},
  \url{https://www.jstor.org/stable/1969529}, publisher: Annals of Mathematics

\bibitem{peleg_canonical_1999}
Peleg, B., Rosenmüller, J., Sudhölter, P.: The canonical extensive form of a
  game form: symmetries. In: Alkan, A., Aliprantis, C.D., Yannelis, N.C. (eds.)
  Current {Trends} in {Economics}: {Theory} and {Applications}, pp. 367--387.
  Studies in {Economic} {Theory}, Springer, Berlin, Heidelberg (1999).
  \doi{10.1007/978-3-662-03750-8-22},
  \url{https://doi.org/10.1007/978-3-662-03750-8-22}

\bibitem{sudholter_canonical_2000}
Sudhölter, P., Rosenmüller, J., Peleg, B.: The canonical extensive form of a
  game form: {Part} {II}. {Representation}. Journal of Mathematical Economics
  \textbf{33}(3),  299--338 (Apr 2000). \doi{10.1016/S0304-4068(99)00019-1},
  \url{https://www.sciencedirect.com/science/article/pii/S0304406899000191}

\bibitem{myerson_game_1991}
Myerson, R.B.: Game {Theory}: {Analysis} of {Conflict}. Harvard University
  Press (1991). \doi{10.2307/j.ctvjsf522},
  \url{https://www.jstor.org/stable/j.ctvjsf522}

\bibitem{nash_equilibrium_1950}
Nash, J.F.: Equilibrium {Points} in n-{Person} {Games}. Proceedings of the
  National Academy of Sciences of the United States of America  \textbf{36}(1),
   48--49 (1950), \url{http://www.jstor.org/stable/88031}

\bibitem{du_experimental_2002}
Du, J., Li, H., Xu, X., Shi, M., Wu, J., Zhou, X., Han, R.: Experimental
  {Realization} of {Quantum} {Games} on a {Quantum} {Computer}. Physical Review
  Letters  \textbf{88}(13),  137902 (Mar 2002).
  \doi{10.1103/PhysRevLett.88.137902},
  \url{https://link.aps.org/doi/10.1103/PhysRevLett.88.137902}, publisher:
  American Physical Society

\bibitem{chen_quantum_2003}
Chen, L., Ang, H., Kiang, D., Kwek, L., Lo, C.: Quantum prisoner dilemma under
  decoherence. Physics Letters A  \textbf{316}(5),  317--323 (Sep 2003).
  \doi{10.1016/S0375-9601(03)01175-7},
  \url{https://linkinghub.elsevier.com/retrieve/pii/S0375960103011757}, number:
  5

\bibitem{li_quantum_2012}
Li, Q., Iqbal, A., Chen, M., Abbott, D.: Quantum strategies win in a
  defector-dominated population. Physica A: Statistical Mechanics and its
  Applications  \textbf{391}(11),  3316--3322 (Jun 2012).
  \doi{10.1016/j.physa.2012.01.048},
  \url{https://www.sciencedirect.com/science/article/pii/S0378437112000921}

\bibitem{nawaz_strategic_2013}
Nawaz, A.: The {Strategic} {Form} of {Quantum} {Prisoners}' {Dilemma}. Chinese
  Physics Letters  \textbf{30}(5),  50302--050302 (May 2013).
  \doi{10.1088/0256-307X/30/5/050302},
  \url{https://cpl.iphy.ac.cn/EN/10.1088/0256-307X/30/5/050302}

\bibitem{anand_solving_2020}
Anand, A., Behera, B.K., Panigrahi, P.K.: Solving diner’s dilemma game,
  circuit implementation and verification on the {IBM} quantum simulator.
  Quantum Information Processing  \textbf{19}(6), ~186 (May 2020).
  \doi{10.1007/s11128-020-02687-5},
  \url{https://doi.org/10.1007/s11128-020-02687-5}

\bibitem{frackiewicz_nash_2022}
Frąckiewicz, P., Szopa, M., Makowski, M., Piotrowski, E.: Nash {Equilibria} of
  {Quantum} {Games} in the {Special} {Two}-{Parameter} {Strategy} {Space}.
  Applied Sciences  \textbf{12}(22),  11530 (Jan 2022).
  \doi{10.3390/app122211530}, \url{https://www.mdpi.com/2076-3417/12/22/11530},
  number: 22 Publisher: Multidisciplinary Digital Publishing Institute

\bibitem{giannakis_quantum_2019}
Giannakis, K., Theocharopoulou, G., Papalitsas, C., Fanarioti, S., Andronikos,
  T.: Quantum {Conditional} {Strategies} and {Automata} for {Prisoners}’
  {Dilemmata} under the {EWL} {Scheme}. Applied Sciences  \textbf{9}(13), ~2635
  (Jan 2019). \doi{10.3390/app9132635},
  \url{https://www.mdpi.com/2076-3417/9/13/2635}, number: 13 Publisher:
  Multidisciplinary Digital Publishing Institute

\bibitem{consuelo-leal_pareto-optimal_2019}
Consuelo-Leal, A., Araujo-Ferreira, A.G., Lucas-Oliveira, E., Bonagamba, T.J.,
  Auccaise, R.: Pareto-optimal solution for the quantum battle of the sexes.
  Quantum Information Processing  \textbf{19}(2), ~41 (Dec 2019).
  \doi{10.1007/s11128-019-2536-7},
  \url{https://doi.org/10.1007/s11128-019-2536-7}

\bibitem{szopa_efficiency_2021}
Szopa, M.: Efficiency of {Classical} and {Quantum} {Games} {Equilibria}.
  Entropy  \textbf{23}(5), ~506 (May 2021). \doi{10.3390/e23050506},
  \url{https://www.mdpi.com/1099-4300/23/5/506}, number: 5 Publisher:
  Multidisciplinary Digital Publishing Institute

\end{thebibliography}

\end{document}